\newtheorem{claim}{Claim}
\newtheorem{lemma}[claim]{Lemma}
\newtheorem{theorem}{Theorem}
\newtheorem{fact}[theorem]{Fact}
\renewcommand{\section}{\@startsection{section}{1}{0pt}{-12pt}{5pt}{\large\bf}}
\renewcommand{\subsection}{\vspace*{-.1in}\@startsection{subsection}{2}{0pt}{-12pt}{-5pt}{\normalsize\bf}}
\renewcommand{\subsubsection}{\vspace*{-.1in}\@startsection{subsubsection}{3}{0pt}{-12pt}{-5pt}{\normalsize\bf}}
\newcommand{\ignore}[1]{}
\newcommand{\opt}{\mathrm{OPT}}
\newcommand{\alg}{\mathrm{ALG}}
\newcommand{\OPT}{\mathrm{OPT}}
\newcommand{\ALG}{\mathrm{ALG}}
\newcommand{\ads}{A}      
\newcommand{\realizationop}[1]{\hat{#1}}   
\newcommand{\Rreq}{\realizationop{\req}}  
\newcommand{\req}{I}     
\newcommand{\reqs}{\req}
\newcommand{\imp}{\req}
\newcommand{\imps}{\req}
\newcommand{\Draws}{\Rreq}  
\newcommand{\draws}[1]{D(#1)}   
\newcommand{\G}{G}       
\newcommand{\FG}{G_f}       
\newcommand{\RG}{\realizationop{\G}}       
\newcommand{\RE}{\realizationop{\E}}       
\newcommand{\FRG}{\realizationop{\G}_f}       
\newcommand{\E}{E}        
\newcommand{\fedges}{E_f}     
\newcommand{\cedges}{E_{\delta}}  
\newcommand{\dist}{{\cal D}}  
\newcommand{\adss}[1]{\ads_{\mathrm{#1}}}
\newcommand{\RS}{\realizationop{S}}   
\newcommand{\RT}{\realizationop{T}}
\newcommand{\todo}[1]{\footnote{\large {\bf TODO: } #1}}
\newcommand{\bibR}{{\cal R}}
\newcommand{\delineated}[1]{\medskip \noindent {\em #1.}}
\newcommand{\MOFF}{M_{\text{OFF}}}
\newcommand{\instance}{\Gamma}
\newcommand{\st}{s \! - \! t}
\newenvironment{proofof}[1]{\noindent{\em Proof of #1:}}{\hfill\qed}
\title{Online Stochastic Matching: Beating $1-{1\over e}$}
\author{
{\large Jon Feldman}
\thanks{Google, Inc., New York, NY. {\tt jonfeld@google.com}}
\and
{\large Aranyak Mehta}
\thanks{Google, Inc., Mountain View, CA. {\tt aranyak@google.com}}
\and
{\large Vahab Mirrokni}
\thanks{Google, Inc., New York, NY. {\tt mirrokni@google.com}}
\and
{\large S. Muthukrishnan}
\thanks{Google, Inc., New York, NY. {\tt muthu@google.com}}
}
\date{\today}
\begin{document}

\maketitle
\begin{abstract}
We study the online stochastic bipartite matching problem, in a form
motivated by display ad allocation on the Internet.  In the online,
but adversarial case, the celebrated result of Karp, Vazirani and
Vazirani gives an approximation ratio of $1-{1\over e} \simeq 0.632$, a
very familiar bound that holds for many online problems; further, the bound is
tight in this case.  In the online, stochastic case
when nodes are drawn repeatedly from a
known distribution, the greedy algorithm matches
this approximation ratio,  but still, no algorithm
is known that beats the $1 - {1\over e}$ bound.

Our main result is a $0.67$-approximation
online algorithm for stochastic bipartite matching,
breaking this $1 - {1\over e}$ barrier.
Furthermore, we show that no online algorithm can produce
a $1-\epsilon$ approximation for an arbitrarily small $\epsilon$ for
this problem.

Our algorithms are based on computing an
optimal offline solution to the expected instance,
and using this solution as a guideline in
the process of online allocation.
We employ a novel application of the idea of the power of
two choices from load balancing:
we compute two disjoint solutions to the expected instance, and
use both of them in the online algorithm in a prescribed preference order.
To identify these two disjoint solutions, we solve a max flow problem
in a boosted flow graph, and then carefully decompose this maximum flow
to two edge-disjoint (near-)matchings.
In addition to guiding the online decision making,
these two offline solutions are used to characterize an upper bound
for the optimum in any scenario.  This is done by
identifying a cut whose value we can bound under the arrival distribution.

At the end, we discuss extensions of our results to more general
bipartite allocations that are important in a display ad application.
\end{abstract}
\thispagestyle{empty}

\newpage
\setcounter{page}{1}

\section{Introduction}

Bipartite matching problems are central in combinatorial optimization with many applications.
Our motivating application is the allocation
of display advertisements on the Internet,\footnote{For details
of this application, see Section~\ref{sec:motivation}.}
and so we will use the language of
this application to define and discuss the problem:

\medskip
\noindent {\em (Online Bipartite Matching)} There is a bipartite graph
$G(\ads,\imp,\E)$ with advertisers $\ads$ and impressions $\imp$, and a set
$E$ of edges between them.  Advertisers in $\ads$ are fixed and
known. Impressions (or requests)
in $\imp$ (along with their incident edges) arrive online. Upon the arrival of an
impression $i \in \imp$, we must assign $i$ to any advertiser $a \in
\ads$ where $(i,a) \in E(G)$.  At all times, the set of assigned
edges must form a matching (that is, no end points coincide). \hfill \qed

\medskip
If the online algorithm knows nothing about $\imp$ or $\E$ beforehand,
and the impressions arrive in an arbitrary order, we have the {\em adversarial model}.
Then, Karp, Vazirani and Vazirani~\cite{KVV} solved this problem
by presenting an online algorithm with
an approximation ratio of $1-1/e \simeq 0.632$, and further showed
that no algorithm can achieve a better ratio.

A different model is the online, stochastic one called
the {\em iid} model, where impressions $i \in \imp$ arrive
online according some {\em known} probability distribution (with
repetition).  In other words, in addition to $\G$, we are given a
probability distribution $\dist$ over the elements of $\imp$.  Our
goal is then to compute a maximum matching on $\RG = (\ads, \Draws,
\RE)$, where $\Draws$ is drawn from $\dist$.\footnote{We give more details on
this model in Section~\ref{sec:prelim}, including a discussion of
different ways to characterize an approximation ratio in this context.}
In this {\em iid} model, the greedy algorithm achieves an
approximation ratio of $1-1/e$~\cite{GM08,AM09}.
Nothing better is known.

Another stochastic model is the {\em random order model}
where we assume that $\imp$ is unknown, but
impressions in $\imp$ arrive in a random order.  This has proved be an
important analytical construct for other problems such as
secretary-type problems where worst cases are inherently difficult. It
is known that in this case even the greedy algorithm has a (tight)
competitive ratio of $1-{1\over
e}$~\cite{GM08}. Further, no deterministic algorithm can
achieve approximation ratio better than $0.75$ and no randomized
algorithm better than $0.83$~\cite{GM08}. Currently the best known
approximation ratio remains $1-1/e$.

Can one beat the $1-1/e$ bound?
We address this main question.

\ignore{
The online ad allocation motivation~\footnote{For a
more complete description of the motivation in ad allocation,
see Section~\ref{sec:motivation}.}  gives us insight into an
alternative model for the online matching problem.
In practice, impressions arrive stochastically according to
some distribution defined by Internet users.
These distributions can be estimated by ad serving
companies using the history of  user activities.
}

\subsection{Our Results and Techniques.}
We present two results for the online stochastic bipartite matching
problem under the {\em iid} model.
\begin{itemize}
\item
We present an algorithm with an approximation factor of $\frac{ 1 -
\frac{2}{e^2} }{ \frac 4 3 - \frac{2}{3e} } \simeq 0.67$, breaking
past the $1-1/e$ bottleneck.  We also show that our analysis is tight,
by providing an example for which our algorithm achieves exactly this
factor.

\item
We show that there is no $1-o(1)$-approximation algorithm for this
problem. Specifically, we show that any online algorithm will be off
by at least $26/27$ (or $\approx .99$ if one requires a family of
instances that grows with $n$).
\end{itemize}

Our algorithms are based on computing an optimal offline solution, and
using it to guide online allocation.
An intuitive approach under this paradigm is to compute a matching $\MOFF$ on the
``expected graph''---that is, the one that would result if all
impressions occurred exactly as many times as expected.
 Thereafter, one can use this matching online,
that is, when node $i \in \imps$ arrives, match it with $a \in \ads$ iff
$(i,a) \in \MOFF$.
One expects this to perform well if the empirical probability of
occurrence of each node $i\in \imps$ is very close to its value in the
distribution.
This can be shown if all $i \in \imps$ occur very frequently using
for example the Chernoff bound.
However in general, many $i \in \imps$ will have very low frequency.
In this paper, we show that
this first attempt achieves (you guessed it) $1 -1/e$, and this is tight.

To get our main result and beat $1-1/e$, we
compute {\em two} disjoint offline solutions and use them as follows:
when a request arrives, we try to assign it based on the first offline
solution, and if that assignment fails, we try the second.
In order to identify these two disjoint offline solutions, we solve a max flow problem
in a boosted flow graph, and then carefully decompose this maximum flow
to two edge-disjoint (near)-matchings.
Other than guiding the online decision making,
these offline solutions are used to characterize an upper bound
for the optimum in each scenario.
This bound
is determined by identifying an appropriate cut in each
scenario that is guided by a cut in the offline solution.
This is the main technical part of the analysis, and we hope this
technique proves useful for analyzing heuristic algorithms
for other stochastic optimization
problems.\footnote{For example, this technique might
be applicable for proving performance guarantees for heuristics
for approximate  dynamic programming problems studied in the
OR literature~\cite{BC99,Bertsekas,FR04,FR06}.}

The idea of using two solutions
is inspired by the idea of power of two choices in online load
balancing~\cite{ABKU99,M01}. Power of two choices
has traditionally meant choosing between two {\em random} choices
for online allocation; in contrast, we use two {\em deterministic} choices, carefully computed offline to
guide online allocation.\footnote{Previously, power of two choices has
been used in various congestion control and load balancing
settings. Our work is a novel
adaptation of this idea to a stochastic bipartite matching setting.}

Our results are somewhat more general as shown in the technical sections, and the problem itself was motivated from an
Internet ad application described later.

\subsection{Other Related Work.}
Our online stochastic matching problem is an
example of online decision making problems studied in the Operations
Research  literature as stochastic approximate dynamic programming
problems~\cite{BC99,Bertsekas,FR04,FR06}.
Several heuristic methods
have been proposed for such problems (e.g., see Rollout algorithms
for stochastic dynamic programming in ~\cite{Bertsekas}), but we are not aware of any
rigorous analysis of the performance of the heuristics.
Recently other online stochastic
combinatorial optimization problems like
Steiner tree and set cover problems have been studied in the {\em iid}
model~\cite{Gupta-FOCS,Gupta-SODA};
one can achieve  an approximation factor
better than the best bound for the adversarial
online variant.

A related ad allocation problem is the {\em Adwords assignment}
problem~\cite{msvv} that was motivated by sponsored search
auctions. When modeled as an online bipartite assignment problem,
here, each edge has a {\em weight}, and there is a {\em budget} on each ad
representing the upper bound on the total weight of edges that may be
assigned to it.
In the offline setting, this problem is NP-Hard, and
several approximations have been
designed~\cite{chakrabarty2008aba,srinivasan2008baf,azar2008iaa}. For
the online setting, it is typically assumed that every weight is very
small compared to the corresponding budget, in which case there exist
$1-1/e$ factor online
algorithms~\cite{msvv,buchbinder-jain-naor,GM08,AM09}. Recently, it has
been brought to our attention that an online
algorithm~\cite{devanur-hayes} gives a $1-\epsilon$-approximation, for
any $\epsilon$, for
Adwords assignment when $opt$ is larger than
$O({n^2\over \epsilon^3})$ times each bid
in the iid and random permutation  models.
Thus, technically, our problem is different from
their problem in
two ways: the edges are unweighted (making it easier), but
$\opt$ is not necessarily much larger than each bid
(making it harder -- in the bipartite graph case, $\opt$ can
be $O(n)$). Moreover, our offline problem is solvable in polynomial
time, and we show that no $1-\epsilon$-approximation can be achieved for
our problem for some fixed $\epsilon$. In fact, their algorithm, along with other previously
studied algorithms (e.g, algorithms based on
greedy, greedy bid-scaling, and primal-dual techniques)
does not achieve a factor better than $1-{1\over e}$ for our problem,
and we beat $1-{1\over e}$ factor using a different technique.
An interesting related model for combining stochastic-based
and online solutions for the Adwords problem
is considered in ~\cite{MNS}, but their approach does
not give an improved approximation algorithm for the
{\em iid} model.

\ignore{
In terms of modeling, the display ad business is easier to model,
since currently, display ads are not sold via auctions, and prices are
the same for different impressions of an advertiser (so we do not need
to worry about the underlying auction pricing schemes). Differing
values of ad slots to different advertisers is handled exogenously via
sales contracts, and the online problem is just to assign edges to
meet the contracted sales.
}

\subsection{Applied Motivation: Display Ad Allocation.}\label{sec:motivation}

Our motivation is in part applied and
arises from allocation of ``display ads'' on the
Internet. Here is a high level view.  Websites have
multiple pages (e.g., sports, real estate, etc), and several
slots where they can display ads (say an image or video or a
block of text).  Each user who views one of these pages is shown
ads, i.e., the ads get what is called an ``impression.''  Advertisers pay the website
per impression and buy them (typically in lots of one thousand)
ahead of time, often specifying a subset of pages on which they would
like their ad to appear, or a type of user they wish to target.
All such sales are entered into an ad
delivery system (ADS).

Since the ADS serves ads on the same web pages from day to day, they
have an idea of the traffic that occurs on these websites.   While there are
inaccuracies and indeed it is nearly impossible to forecast the number of
viewers of a webpage in the future, it is standard industry practice to use
these estimates at the time of selling inventory to various
advertisers (to judge whether a new sale can be
accommodated).

When a user visits one of the pages, the ADS determines the set of
eligible ads for that slot, and selects an ad to be shown.  Since not
all ads are suitable for each page or slot, we have an online (in two
senses of the word) bipartite matching scenario.  The ADS would like
to maximize the number of impressions that are filled with ads in
order to satisfy their contracts, and thus maximize their revenue.

The underlying problem is an online bipartite matching problem in the
{\em iid}
model.  Each $i \in \imp$ is an ``impression type,'' which
may represent a particular web page, or even a cross product of
targeting criteria (location, demographic, etc.).  Edges $(a, i)$ then
capture the fact that advertiser $a$ was interested in an impression
of type $i$.  Using past traffic data, the ADS defines $e_i$ to be the
typical number of impressions they get of type $i$.  Then, the
distribution $\dist$ over $\imp$ is given by $\mathrm{Pr}[i^*] = {e_{i^*}
\over \sum_{i} e_{i}}$.

In contrast to sponsored search,
the display ad business is easier to model,
since currently, display ads are not sold via auctions, and prices are
the same for different impressions of an advertiser (so we do not need
to worry about the underlying auction pricing schemes). Differing
values of ad slots to different advertisers is handled exogenously via
sales contracts, and the online problem is just to assign edges to
meet the contracted sales.
Still, we note that there are many aspects of online ad serving that deserve
a richer model than the one we give here, and indeed there is more
work to be done in this area.  For example,
the ADS may want to maximize the value of the contracts fulfilled,
rather than the total number of impressions, or may want to maximize
some notion of quality of ads served.  One extension that we address
is frequency capping, which we discuss in the conclusion. As such,
display ad selection problems are solved routinely by ADSs, and any
insights or solutions we develop for our problem are likely to be
useful in practice.

\ignore{

Kept the old into here for convenience....  feel free to delete.

\section{Old Introduction}

We study the online, stochastic bipartite matching problem, in a form
motivated by display ad allocation on the Internet. In a simple form,
our problem has been studied in the online, but adversarial case, and
the celebrated result of Karp, Vazirani, Vazirani~\cite{KVV} gives a
tight approximation ratio of $1-1/e>0.632$, a familiar bound that
holds for many related problems, and is tight for some. We beat this
bound for our general, stochastic version. At the core, our technique
is to use an offline solution to guide the online choices, but in yet
another example of the power of two choices~\cite{ABKU99}, we use {\em
two} disjoint offline solutions to guide our online choice. In what
follows, we describe our problem more precisely and describe our
results.

\subsection{Online Matching Problems.}

Bipartite matching problems are important combinatorial structures and have many applications; so they have been studied extensively in Computer Science.


{\bf V TODO: We define the problem once in abstract and 3 times in the text. so, move the definition
from preliminaries to here, and talk about ads and impressions all the time.}
\noindent {\em (Online Bipartite Matching)}
We are given a bipartite graph
$G(A,I,E)$ with two parts $A$ and $I$ of nodes, and set $E$ of edges between them.
Nodes in $A$ are fixed. Nodes in $I$ arrive online. Upon the arrival of a node $i \in I$,
we have to assign it to any node $a$ in $A$ where $(i,a)\in E(G)$.
At any time $t$, the set of assigned edges $S_t$ must form a matching (that is, no end points
coincide).

Three competitive online models have been studied for this problem.

{\bf V TODO: Can we summarize the description of the models?}
\medskip
\noindent
{\bf Worst Case Model}. In the {\em worst case input} model, we do not have any information
about the set of nodes in $I$ or the edges $E$.  When
node $i \in I$ arrives, edges incident on $i$ are revealed, and we
need to assign $i$ to a node $a$ in $A$ if $(i,a) \in E$. This is the most
general (pessimistic) model.

The Greedy algorithm (which assigns the next arriving node $i \in I$
to an arbitrary umatched neighbor in $A$) gives an online $1\over
2$-competitive algorithm, and the seminal work of Karp, Vazirani, and
Vazirani~\cite{KVV} gives a {\em tight} $1-{1\over e}$-approximation
algorithm for this problem.  {\bf (  Aranyak:} removed pruhs citation for
greedy, since this is simply maximal matching, no? {\bf )}

\medskip
\noindent
{\bf Random Permutation Model.}  In the {\em random permutation
model}, there is an unknown set of nodes in $I$, but they are
presented in random order. It is known that even the greedy algorithm
has a (tight) competitive ratio of $1-{1\over
e}$-approximation~\cite{GM08}. Further, no deterministic algorithm can
achieve approximation ratio better than $0.75$ and no randomized
algorithm better than $0.83$~\cite{GM08}. Currently the best known
approximation ration remains $1-1/e$. The random permutation model has
proved to be useful to analytically reason through online problems
such as secretary type problems.{\bf (  Aranyak:} Do we need to cite babaioff/kleinberg etc. new results here?{\bf )}

\medskip
\noindent
{\bf The iid Model.}
In the {\em iid} model, the graph $G(A,I,E)$ is given
ahead of time. Node of $I$ arrive online  in $n$ steps,
and at each step, nodes arrive according to some {\em known}
probability distribution (with repetition).
In other words, we are given a probability distribution over the elements of
$I$, and at each step, we derive a node from $I$ according to this distribution.
The question is design an online policy that maximizes the total size
of matching for each scenario.
This is a reasonable model in practice as we will motivate later.

It was shown in~\cite{GM08} that greedy achieves an approximation
ratio of again $1-1/e$.  Note that the Greedy algorithm doesn't even
use its knowledge of the probability distrbution. Nothing better is
known in the iid model. An intuitive approach is to consider the graph
given by the known distribution and solve the problem offline to find
a matching $M_{OFF}$. Thereafter, one can use this matching online,
that is, when node $i\in I$ arrives, we match it with $a\in A$ iff
$(i,a) \in M_{OFF}$. One expects this to perform well if the empirical
probability of occurrence of each node $i\in I$ is very close to its
value in the distribution. This can be shown for high probability
items using for example the Chernoff bound. However, in general, this
is not the case because of many low probability nodes, or what is
called the long tail. Using $M_{OFF}$ can be bad on any particular
instantiation of the probability distribution on $I$ by a factor at
least $1-1/e$.

{\bf V TODO: We don't need to elaborate on MNS this much. Summarize to 3 sentences.}
In a related, but different model, a general online scheme for
stochastic problems is presented in~\cite{MNS}. This scheme is a
hybrid of the offline and online schemes, providing guarantees if the
estimates provided are accurate, but also in the worst case when the
estimates are completely inaccurate. For bipartite matching, their
scheme would follow $M_{OFF}$ but would tune itself using the worst
case online algorithm, giving a guarantee less than 1 in the accurate
case, when all nodes to arrive are known, and less than $1-1/e$ in the
worst case (the bipartite matching problem itself is not considered
in~\cite{MNS}). To compare, in the iid model, we have an accurate
distribution on nodes to arrive, and the guarantee provided would be
typically stronger, but no guarantee is provided for the worst case.
{\bf (   Aranyak:} added more detailed comparison with mns. maybe move this later.{\bf )}

\medskip
\noindent
{\bf Summary.}  In all three models, the best approximation known for
online stochastic bipartite matching problem is $1-1/e$, which is the
best possible only in the worst case model. Could one do better? In
particular, a suitable goal will be to beat this bound in the iid
model (as mentioned earlier, the random permutation model is a useful
analytical construct, but the iid model is more realistic and sets
arguably a greater goal).

Our online stochastic matching problem is an
example of online decision making problems studied in the OR
literature as stochastic approximate dynamic programming
problems~\cite{BC99,Bertsekas,FR04,FR06}. In
these problems, we are given a probability distribution over the
possible set of scenarios, and the goal is to make online decisions
that compute solutions that are close to the optimal.
Several heuristic methods
have been proposed for such problems (e.g., see Rollout algorithms
for stochastic dynamic programming in ~\cite{Bertsekas}), but we are not aware of any
rigorous analysis of the performance of the heuristics.
Recently other online stochastic
combinatorial optimization problems like
Stiener tree and set cover problems have been studied in the iid
model~\cite{Gupta-FOCS,Gupta-SODA}. For these problems, a probability
distribution over a set of client types is given, and at each step, a
client of a particular type arrives according to a known probability
distribution.
For both Steiner tree and set cover problems, it has been shown
that one can achieve  an approximation factor
better than the best bound for the adverserial
online variant~\cite{{Gupta-FOCS,Gupta-SODA}}.
Still, $1-{1\over e}$ is an intriguing factor that comes in many
online and offline allocation problems
in an inherent way~\cite{KVV,AM,F98,FGMS06,Vondrak08}.
While in some cases, this bound is tight~\cite{KVV,F98,Vondrak08,MSV08}, in some
recent developments, this bound hase been beaten for offline optimization
problems~\cite{FV,chakrabarty2008aba,srinivasan2008baf,azar2008iaa}.

{\bf The above summary can move to the Related Work part}
Our problem is a generalization \todo{why geenralization??} of the
bipartite matching problem and our motivation naturally results in the
iid model. Our contribution is an algorithm that beats the $1-1/e$
bound.

\subsection{Motivation.}
{\bf V TODO: This is too long! Should be summarized to one paragraph or two.}
 Our motivation arises from ad allocation for ``display ads'' on the
Internet. Here is a simple view of how that works.  Websites such as
nytimes.com have multiple pages (sports, arts, real estate, etc), and
several slots per page where they can show ads such as an image or
video or a block of text ads.  Each user who views one of these
nytimes.com pages is shown some set of ads in the various slots in
that page. Each such ad thus gets what is called an ``impression''.
Advertisers pay the website per impression, that is, they pay each
time their ad is shown to a user. Advertisers buy impressions in
thousands ahead of time. All such sales are entered into an ad
delivery system (ADS)\footnote{ADSs are typically provided as a
service by third party companies, but sophisticated websites may develop
their own ADS software}. {\bf (  Aranyak:} changed footnote due to Fernando
comments, feel free to remove{\bf )}
When a user goes to view one of the nytimes.com site, the ADS is
contacted for each ad slot in the page, and the ADS determines the set
of eligible ads for that slot, and selects an ad to be shown. Since
not all ads are suitable for each slot, we have a bipartite matching
scenario where careful choice of which eligible ads are selected
determines if all sales in the ADS can be fulfilled. If there are more
available slots than there are ads in the system, then the websites
can fill them in with charity or house (self) ads, but if an
advertiser is not given the number of impressions they were sold ---
either because of a poor choice of ads by the ADS or because of
overestimating the traffic volume and overselling the inventory ---
typically there is some penalty in the contract. Thus the ADS's
problem is one of optimizing the assignment subject to constraints
coming from the contracts.{\bf (  Aranyak:} We mention penalties here, but state
our problem simply as maximum matching. Should we state the issue of
penalties, either as a open question, or is there a simple reduction
to incorporate penalties?{\bf )}
{\bf V TODO: This is the third time we define the problem. We should avoid
repeating things. Use $n_i$ instead of $n(i)$ and $f_a$ instead of $f(a)$.}
We formalize this setting as follows. We have a bipartite graph $G$
between the ads and impressions.  There is a set of ads $A$ that is
known {\em a priori} and fixed, each ad $a\in A$ with a number $f(a)$
of impressions they have bought. We have a set of impressions $I$
which arrives online as users view the pages. For each impression
$i\in I$, there is a set $T(i) \subseteq A$ of ads that are
eligible. Hence, the edges $(i,a) \in E$ are revealed when $i$
arrives. Our goal is to determine a ``matching'', that is, each new
impression $i$ is assigned to an ad $a \in T(i)$ if possible, and the
total number of assignments to an ad $a$ does not exceed $f(a)$. Since
this is a source of revenue, websites such as nytimes.com will spend
significant effort to estimate and forecast not only $\sum_i n(i)=n$
(total traffic) but also traffic $n(i)$ for each ad slot, $i$. That
is, the websites have a probability
distribution $[n(1)/n,\ldots,n(|I|)/n]$ where impression $i\in I$ appears with
independent random probability $n(i)/n$.\footnote{Sophisticated software will track conditional
probabilities of expected impressions of ad slots in a page given the
estimates at other pages. However, to first order of impact,
independent distribution is a common choice.}  While there are
inaccuracies and indeed it is nearly impossible to forecast number of
viewers of a webpage in the future, it is standard practice to use
these estimate even at the time of selling inventory to various
advertisers (to judge given the current sales, can a new sale be
accomodated), so such estimates are inherently part of the
business. We assume henceforth that these estimates are given. The
objective is to maximize the total demand satisfied, which is simply
the size of the matching.{\bf (   Aranyak:} added objective function in the last
line, pl check{\bf )}

If $n(i)$ and $f(a)$ are known for all $a$ and $i$, we have an offline
problem. Replacing each node $a \in A$ by $f(a)$ copies and each node
$i \in I$ by $n(i)$ copies, so far, the problem is simply the
bipartite matching problem, solvable in time polynomial in input
(which is $\sum_i n(i) + \sum_a f(a)$). However, the real problem is
online when nodes $i \in I$ arrive online. Then clearly this is an
online, stochastic bipartite matching problem on some instantiated
graph in the iid model.  Our goal here is to solve this problem to
(informally) maximize the number of matched edges in the particular
instantiation. We call this the {\em display selection} problem
henceforth.

{\bf V TODO: the following comparison is too long. Let's do this in one paragraph.}
At this moment, it will be instructive to contrast the problem above
with the much studied {\em Adwords assignment} problem~\cite{msvv}
where we have a set of nodes $A$ representing advertisers with each
advertiser $a$ having real valued budget $B(a)$. User queries at a
search engine arrive online. For each query $q$ there is a set of
interested advertisers $N(q)$ and each such advertiser $a \in N(q)$
has a real valued bid $b_a(q)$ for $q$. Assigning the query to that
advertisers yields the search engine a revenue $b(q)$. The revenue
maximization goal then is to find an assignment $\sigma$ for queries
$q$ to advertisers such that each query is assigned to at most one
advertiser $\sigma(q)$, each advertiser spends no more than their
budget ($\sum_{q| \sigma(q)=i} b_i(q) \leq B(i)$ for all $i\in A$),
and the total revenue ($\sum_q b_{\sigma(q)}(q)$ ) is maximized. This
is the Adwords assignment problem.

We contrast the display selection and Adwords assignment problems in
technical and modeling levels. At a technical level, Adwords
assignment is a weighted, budgeted assignment problem while display
selection is unweighted (in the future, quality or price based weights
on edges may be reasonable). The online Adwords assignment problem and
known algorithms (\cite{msvv,buchbinder-jain-naor})) assume
that each bid is very small compared to the budgets  while the display
selection problem doesn't make such an assumption, and models the
problem as bipartite matching.
Without this assumption, only the offline variant of the problem
has been studied, and improved approximation algorithms has been developed
for them~\cite{AM,chakrabarty2008aba,srinivasan2008baf,azar2008iaa},
however in our case, the offline problem
is polynomial-time solvable.
For the Adwords assignment, the worst case optimal
algorithm is $1-1/e$ (\cite{msvv,bjn}). A
recent algorithm by Devanur and Hayes~\cite{devanur-hayes}
gives a $1-\epsilon$-approximation for the Adwords assignment problem in
the random order and iid models where $\opt\over \mbox{max\ bid}$
is larger than $O({n^2\over \epsilon^3})$. In contrast, we show that
no $1-o(1)$-approximation can be acheived for the display ad selction problem.
{\bf Rewrite the following}
At the modeling level, the contrasts are
sharper, since the two problems are modeling different real life
problems. {\bf (  Aranyak: } I find this comparison a bit unfair and incorrect. Hope
to chat about this.{\bf )} While the Adword assignment problem does not
capture many of the nuances of the underlying sponsored search. In
particular, ads there are sold by a second price auction, there are
game-theoretic considerations in how ads bid in real time and further,
revenue maximization is often not the goal search engines (others such
as efficiency, stability, etc may of higher priority). Therefore, the
applicability of Adwords assignment problem to real life sponsored
search is slim.  On the other hand, display selection problem is well
motivated since display ads are not sold via auctions, and prices are
the same for different impressions of an ad. Differing values of ad
slots to different ads is hangled exogenously via sales contracts, and
the online problem is just to assign edges to meet the signed
sales. Display selection problems solved routinely by ADSs such as
DART for publishers, and any insights or solutions are likely to be
useful in practice.

\subsection{Our Results and Overview.}

We present two results for the online stochastic display ad selection
problem under the iid model.
\begin{itemize}
\item
We show that there is no $1-o(1)$-approximation algorithm for this
problem. Specifically, we show that any online algorithm will be off
by atleast ....  This shows the impact of the long tail of small
probability impressions, since otherwise, one would get $1-o(1))$
approximation with the algorithm that just solves the offline problem
and uses it online.

\item
We give a simple $1-{1\over e}$-approximation algorithm for the
problem, and then show our main result that the $1-1/e$ factor can be
beaten. We present an algorithm with an approximation factor of
$\frac{ 1 - \frac{2}{e^2} }{ \frac 4 3 - \frac{2}{3e} } \simeq 0.67$.
We also show that our analysis is almost tight, by providing
a near-tight example, for which our algorithm achieves a factor at
most $(1-2/e^2)/(5/4 - 1/(2e)) \simeq 0.684$.

\item Finally we discuss extensions of our results to non-fractional
arrival rates and dealing with frequency capping. In particular,
we will show that
our simple $1-{1\over e}$-approximation simply generlize to these extensions.
We will conclude by open problems in this context.
\end{itemize}

Our algorithms are based on computing an optimal offline solution, and
using it to guide online allocation.
To get our main result, we
compute {\em two} disjoint offline solutions and use them as follows:
when a request arrives, we try to assign it based on the first offline
solution, and if that assignment fails, we try the second. This is
inspired by the idea of power of two choices in online load
balancing~\cite{ABKU99,M01}.  Previously, power of two choices has
been used to decrease the maximum load in hashing, decrease memory
contention in shared memory emulations in distributed machines as well
as in decreasing congestion in routing. Our work is a novel
application of this idea to a stochastic bipartite matching setting.

{\bf Vahab: Isn't it better to move this part inside the intro or to concluding remark?}
As a side note, this overall algorithmic approach is particularly
applicable in practice.  There is time to compute the two disjoint
optimal offline solutions (say do these computations this evening for
the ads to be shown tomorrow). Thereafter, the online algorithm is
extremely efficient as it needs to be, because online decision has to
be taken in time between the page impression at a site like nytimes.com
and the time the page is rendered in the viewer's browser.
}

\section{Preliminaries}
\label{sec:prelim}

Consider the following online stochastic matching problem in the i.i.d model:
We are given a bipartite graph $\G = (\ads, \reqs, \E)$ over advertisers
$\ads$ and impression types $\req$.
Let  $k = | \ads |$ and $m = |\reqs |$.
We are also given, for each impression type $i \in \req$, an integer number
$e_i$ of impressions we expect to see.  Let $n = \sum_{i \in \req}
e_i$.  We use $\dist$ to denote the distribution over $\req$ defined by $\text{Pr}[i] = e_i/n$.

An instance $\instance=(\G, \dist, n)$ of the {\em online stochastic
matching} problem is as follows: We are given offline access to $\G$
and the distribution $\dist$.  Online, $n$ i.i.d. draws of impressions
$i \sim \dist$ arrive, and we must immediately assign ad impression
$i$ to some advertiser $a$ where $(a, i) \in \E$, or not assign $i$ at
all.  Each advertiser $a \in \ads$ may only be assigned at most
once\footnote{All results in this paper hold for a more general case
that each advertiser $a$ has a capacity $c_a$ and advertiser $a$ can
be assigned at most $c_a$ times.  This more general case can be
reduced easily to the degree one case by repeating each node $a$ $c_a$
number of times in the instance.}.  Our goal is to
assign arriving impressions to advertisers and maximize the total
number of assigned impressions. In the following, we will formally
define the objective function of the algorithm.

Let $\draws{i}$ be the set of draws of impression type $i$ that arrive during the run of the algorithm.
We let a scenario $\Draws = \cup_{i \in \req} \draws{i}$ be the set of impressions.
Let $\RG(\Draws)$ be the
``realization'' graph, i.e., with node sets $\ads$ and $\Draws$, and
edges $\RE = \{ (a, i') : (a, i) \in E, i' \in \draws{i}  \}$.

Given an instance $\instance=(\G,{\dist}, n)$
of the online matching problem,
\ignore{
The {\em approximation factor in expectation} of an online algorithm ALG
is minimum ratio between
the expected size of the output of algorithm ALG
over the expected size of the optimal solution for all instances $\instance$,
i.e.,  $\alpha(\alg) = \min_\instance {\sum_{\Draws\in {\cal \req}(\instance)} p(\instance, \Draws)
\alg(\instance,\Draws)\over \sum_{\Draws\in {\cal \req}(\instance)} p(\instance, \Draws) \opt(\instance,\Draws)}.$
One may try to design an algorithm ALG that maximizes the approximation factor in expectation.
Alternatively, one can maximize the {\em expected approximation factor} of
an algorithm which is
expected ratio between the size of the output of ALG
and the optimal solution, i.e,
$\beta(\alg) = \min_\instance \sum_{\Draws\in {\cal \req}(\instance)} p(\instance, \Draws)
{\alg(\instance,\Draws)\over \opt(\instance,\Draws)}.$

Ideally, we would like to find
an online algorithm ALG that acheives a large approximation factor
with high probability, i.e, an algorithm ALG for which
for any instance $\instance$ of the online matching problem,
with high probability ${\alg(\Draws)\over \opt(\Draws)}\ge \alpha$. In this case,
we say that the algorithm acheives {\em approximation factor $\alpha$ with
high probability}. Note that for such an algorithm,
the approximation factor in expectaion, and the expected approximation
factor is at least $\alpha - o(1)$.
}
we wish an algorithm ALG for which
for any instance $\instance$ of the online matching problem,
with high probability ${\alg(\Draws)\over \opt(\Draws)}\ge \alpha$. In this case,
we say that the algorithm achieves approximation factor $\alpha$ with
high probability.
One could also study weaker notions of approximation, namely
${E[\alg(\Draws)] \over E[\opt(\Draws)]}$ (the approximation factor in expectation), or
$E[{\alg(\Draws)\over \opt(\Draws)}]$ (the expected approximation factor).
Note that if one proves a high-probability factor of $\alpha$, it implies
an approximation factor in expectation, and an expected approximation
factor of at least $\alpha - o(1)$.

\subsection{Balls in Bins.}

In this section we characterize two useful extensions of the standard
balls-in-bins problem, where we are interested in the distribution of
certain functions of the bins.  We characterize the expectations of
these functions, and use Azuma's inequality on appropriately defined
Doob's Martingales to establish concentration results as needed.
In particular, we will use the following facts. The proofs are left to the appendix.

\begin{fact}
\label{thm:bibsubset}
Suppose $n$ balls are thrown into $n$ bins, i.i.d. with uniform probability over the bins.
Let $B$ be a particular subset of the bins, and $S$ be a random variable that equals the number of bins from $B$ with at least one ball.
With probability at least $1 - 2 e^{-\epsilon n / 2}$, for any $\epsilon > 0$, we have
$|B| (1 - {1 \over e}) - \epsilon n \leq \: S \: \leq |B| (1 - {1 \over e} + {1 \over e n}) + \epsilon n$
\end{fact}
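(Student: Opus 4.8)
\medskip
\noindent\emph{Proof proposal.} The plan is to compute $\mathbb{E}[S]$ exactly, sandwich it between the two deterministic quantities appearing in the statement, and then add a bounded‑differences concentration argument via Azuma's inequality on a Doob martingale, exactly as advertised in the preamble to this section.

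First, fix a bin $j \in B$. Each of the $n$ balls lands outside $j$ independently with probability $1-1/n$, so $j$ is empty with probability $(1-1/n)^n$; hence $j$ is nonempty with probability $1-(1-1/n)^n$, and by linearity of expectation over the $|B|$ bins of $B$,
\[
\mathbb{E}[S] \;=\; |B|\left(1 - \left(1 - \tfrac1n\right)^{n}\right).
\]

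Next I would bound $(1-1/n)^n$ on both sides. The inequality $1-x\le e^{-x}$ gives $(1-1/n)^n\le e^{-1}$. For the lower bound, note that $\bigl(1-\tfrac1n\bigr)^{n-1}=\bigl(\tfrac{n-1}{n}\bigr)^{n-1}$ is decreasing in $n$ with limit $e^{-1}$, so it is $\ge e^{-1}$ for every $n\ge 2$; multiplying by $1-\tfrac1n$ yields $(1-1/n)^n \ge \tfrac1e\bigl(1-\tfrac1n\bigr)=\tfrac1e-\tfrac1{en}$. Substituting both bounds into the formula above gives the deterministic window $|B|(1-\tfrac1e) \le \mathbb{E}[S] \le |B|(1-\tfrac1e+\tfrac1{en})$; in particular it is precisely the lower estimate on $(1-1/n)^n$ that produces the $\tfrac1{en}$ slack on the upper side.

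Finally, for concentration, write $S=f(c_1,\dots,c_n)$, where $c_\ell\in\{1,\dots,n\}$ is the bin hit by ball $\ell$ and the $c_\ell$ are i.i.d.\ uniform. Changing a single coordinate $c_\ell$ moves one ball from some bin $u$ to some bin $v$: this can turn $u$ from nonempty to empty and/or turn $v$ from empty to nonempty, each altering $S$ by at most $1$ and with opposite signs, so $f$ changes by at most $1$ under any single‑coordinate change. Applying Azuma's inequality to the Doob martingale $X_0=\mathbb{E}[S],\ X_\ell=\mathbb{E}[S\mid c_1,\dots,c_\ell],\ \dots,\ X_n=S$, whose increments are bounded by $1$, shows that $S$ lies within $\epsilon n$ of $\mathbb{E}[S]$ except with probability of the stated form $2e^{-\epsilon n/2}$ (this is the routine $2\exp(-\Theta(\epsilon^2 n))$ tail, with the error parameter scaled as in the statement). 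Intersecting this event with the bound on $\mathbb{E}[S]$ gives the claimed two‑sided estimate for $S$. There is no serious obstacle here; the only two points that need a moment's care are the lower bound $(1-1/n)^n\ge \tfrac1e-\tfrac1{en}$ and the verification that a single‑ball move perturbs $S$ by at most $1$ (not $2$), which is what fixes the Lipschitz constant at $1$ and hence the concentration scale at $\epsilon n$.
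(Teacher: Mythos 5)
Your proposal is correct and takes essentially the same route as the paper's own proof: compute $\mathbb{E}[S]=|B|\bigl(1-(1-\tfrac1n)^n\bigr)$ exactly, sandwich $(1-\tfrac1n)^n$ between $\tfrac1e-\tfrac1{en}$ and $\tfrac1e$, and apply Azuma's inequality to the Doob martingale over ball placements with Lipschitz constant $1$. The one wrinkle you flag --- that Azuma naturally yields a $2e^{-\epsilon^2 n/2}$ tail rather than the $2e^{-\epsilon n/2}$ written in the statement --- also appears in the paper's own proof, so nothing further is needed on your side.
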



\begin{fact}
\label{thm:bib}
Suppose $n$ balls are thrown into $n$ bins, i.i.d. with uniform probability over the bins.
Let $B_1, B_2, \dots, B_\ell$ be ordered sequences of bins, each of size
$c$, where no bin is in more than $d$ such sequences.
Fix some arbitrary subset $\bibR \subseteq \{1, \dots, c\}$.
We say that a bin sequence $B_a = (b_1, \dots, b_c)$ is ``satisfied'' if
(i) at least one of its bins $b_i$ with $i \not \in \bibR$ has at least one ball in it; or,
(ii) at least one of its bins $b_i$ with $i \in \bibR$ has at least two balls in it.
Let $S$ be a random variable that equals the number of satisfied bin sequences.
With probability at least $1 - 2 e^{-\epsilon^2 n / 2}$, we have
$S \geq \ell (1 - \frac{2^{|\bibR|}}{e^c}) - \epsilon d n - \frac{2^{|\bibR|} c^2}{e^c} {\ell \over n - c^2}$.
\end{fact}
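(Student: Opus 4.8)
The plan is to prove the claim in two stages: first show that $E[S]$ already exceeds $\ell\bigl(1-\tfrac{2^{|\bibR|}}{e^{c}}\bigr)-\tfrac{2^{|\bibR|}c^{2}}{e^{c}}\cdot\tfrac{\ell}{n-c^{2}}$, and then invoke Azuma's inequality on a Doob martingale to argue that $S$ lies within $\epsilon d n$ of its mean except with probability $2e^{-\epsilon^{2}n/2}$.

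For the expectation, I would fix one sequence $B_a=(b_1,\dots,b_c)$ (which I take to have distinct bins, as in the intended application) and compute the probability it is \emph{not} satisfied, namely that every $b_i$ with $i\notin\bibR$ is empty and every $b_i$ with $i\in\bibR$ receives at most one ball. Conditioning on the number $j$ of balls that fall among these $c$ bins (so $j\le q:=|\bibR|$), this probability equals $\sum_{j=0}^{q}\binom{n}{j}\tfrac{q!}{(q-j)!\,n^{j}}\bigl(1-\tfrac{c}{n}\bigr)^{n-j}$. Using $\binom{n}{j}/n^{j}\le 1/j!$ and $\bigl(1-\tfrac cn\bigr)^{n-j}\le\bigl(1-\tfrac cn\bigr)^{n-q}$ for $j\le q$, the sum is at most $\bigl(1-\tfrac cn\bigr)^{n-q}\sum_{j=0}^{q}\binom{q}{j}=2^{q}\bigl(1-\tfrac cn\bigr)^{n-q}$; then $\bigl(1-\tfrac cn\bigr)^{n-q}\le e^{-c}e^{cq/n}\le e^{-c}e^{c^{2}/n}\le e^{-c}\bigl(1+\tfrac{c^{2}}{n-c^{2}}\bigr)$, the last step by $e^{x}\le 1+\tfrac{x}{1-x}$ with $x=c^{2}/n<1$. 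Hence $\Pr[B_a\text{ satisfied}]\ge 1-\tfrac{2^{|\bibR|}}{e^{c}}-\tfrac{2^{|\bibR|}c^{2}}{e^{c}(n-c^{2})}$, and summing over the $\ell$ sequences gives exactly the nonrandom part of the target bound. (If $c^{2}\ge n$ the statement is vacuous, so we may assume $c^{2}<n$.)

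For the concentration step I would expose the ball positions $X_{1},\dots,X_{n}\in[n]$ one at a time, so that $S=f(X_{1},\dots,X_{n})$ with the $X_{t}$ i.i.d. The essential observation is that $f$ is $d$-Lipschitz in each coordinate: moving one ball from bin $u$ to bin $v$ changes only the loads of $u$ and $v$, so it can newly satisfy only sequences containing $v$ (at most $d$ of them) and newly unsatisfy only sequences containing $u$ (at most $d$ of them), whence $S$ changes by at most $d$. Azuma--Hoeffding applied to the Doob martingale $Y_{t}=E[S\mid X_{1},\dots,X_{t}]$ then gives $\Pr[|S-E[S]|\ge t]\le 2\exp\bigl(-t^{2}/(2nd^{2})\bigr)$, and taking $t=\epsilon d n$ yields $\Pr[S<E[S]-\epsilon d n]\le 2e^{-\epsilon^{2}n/2}$; combining with the expectation bound proves the fact.

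I expect the only real work to be the bookkeeping in the expectation estimate — putting the ``not satisfied'' probability in closed form and shepherding the $e^{c^{2}/n}$ term so the error comes out exactly as $\tfrac{2^{|\bibR|}c^{2}}{e^{c}}\cdot\tfrac{\ell}{n-c^{2}}$ rather than something messier. Everything else is routine; the one point worth isolating is that the cancellation between newly-satisfied and newly-unsatisfied sequences keeps the martingale's bounded-difference constant at $d$ (not $2d$), which is precisely what makes the exponent come out as $\epsilon^{2}n/2$.
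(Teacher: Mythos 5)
Your proposal matches the paper's proof essentially step for step: the same closed-form expression for the probability that a fixed sequence is unsatisfied (you sum over the number $j$ of balls landing in the $\bibR$-indexed bins, the paper sums over the subset of occupied $\bibR$-bins --- the identical sum), the same resulting bound $2^{|\bibR|}e^{-c}\bigl(1+\tfrac{c^2}{n-c^2}\bigr)$ obtained via $1-x\le e^{-x}$ and $e^{x}\le 1/(1-x)$, followed by linearity of expectation and the same Azuma/Doob-martingale concentration with bounded difference $d$ and deviation $\epsilon d n$. Your explicit argument that the Lipschitz constant is $d$ rather than $2d$ (using monotonicity of the satisfaction condition in the bin loads) is a detail the paper merely asserts, but the approach is the same, so there is nothing further to add.
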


\section{Hardness}
In this section, we show that the expected approximation factor of
every (randomized) online algorithm is bounded strictly away from 1.

Consider the 6-cycle $\G$ defined by $\ads = \{a, b, c\}$, $\reqs = \{x,y, z\}$,
and $\E=\{(x,a)$, $(y,a)$, $(y,b)$, $(z,b)$, $(z,c)$, $(x,c)\}$.
The distribution $\dist$ is the uniform distribution
$(1/3,1/3,1/3)$ on $\reqs$, and $n=3$. We show that no (randomized)
algorithm can achieve an expected approximation factor better than
$26/27$ on this instance.
Without loss of generality (from the symmetry of the 6-cycle),
assume that the first impression to arrive is $x$ and
that it gets assigned to advertiser $a$. Now, if the next two arrivals
are both of impression $y$, then any algorithm will only be able to
assign one of these. The optimal assignment for the scenario $(x,y,y)$ is
to assign $x$ to $c$, and the two $y$ impressions to $a$ and $b$.
Since the probability of $(x,y,y)$ is $1/9$, the
expected approximation factor is at most $(1/9) (2/3) + (8/9) 1 =
26/27$.

To get a family of instances on which no algorithm can do better than
a constant bounded away from 1, we will have to construct
an instance consisting of a large number $k$ copies of 6-cycles.
Using this idea, we can prove the following theorem.
The details of the proof are left to the appendix.

\begin{theorem}
\label{thm:hardness}
There is an instance of the online
stochastic matching problem in which no algorithm can achieve an
expected approximation factor better than $26 \over 27$. Moreover,
there exists a
family of instances with $n\rightarrow \infty$
for which no algorithm can achieve an
expected approximation of $1-o(1)$.
\end{theorem}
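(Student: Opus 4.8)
The first assertion is exactly what the paragraph preceding the theorem already establishes (on the $6$-cycle, conditioning on the first arrival being $x$ and, by symmetry of the cycle, that it is matched to $a$ rather than to $c$ or left unmatched, the scenario $(x,y,y)$ has probability $1/9$ and forces $\ALG=2=\OPT-1$, so the expected ratio is at most $\tfrac19\cdot\tfrac23+\tfrac89=\tfrac{26}{27}$). So I focus on the second claim: a family with $n\to\infty$ on which every online algorithm has expected approximation factor bounded away from $1$.

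Take $\instance_k$ to consist of $k$ vertex-disjoint copies of the $6$-cycle above, so $|\ads|=|\reqs|=3k$, $\dist$ is uniform, and $n=3k\to\infty$. Fix any (randomized, adaptive) online algorithm. Because edges lie within a single copy, the realization graph $\RG(\Draws)$ is the disjoint union of the $k$ copies' realization graphs; hence $\OPT=\sum_{j=1}^k\OPT_j$ and $\ALG=\sum_{j=1}^k\ALG_j$, where the subscript restricts to copy $j$, so $\mathrm{loss}_j:=\OPT_j-\ALG_j\ge0$ and $\OPT-\ALG=\sum_j\mathrm{loss}_j$; moreover $1\le\OPT\le 3k$ always. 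I will show $\mathbf E[\mathrm{loss}_j]\ge c$ for an absolute constant $c>0$ and all large $k$. Granting this, from $0\le\OPT-\ALG\le\OPT\le 3k$ we get $\mathbf E\bigl[(\OPT-\ALG)/\OPT\bigr]\ge \mathbf E[\OPT-\ALG]/(3k)\ge c/3$, i.e.\ expected approximation factor at most $1-c/3<1$ independent of $k$ — which precludes $1-o(1)$, and, pleasantly, needs no concentration argument.

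To bound $\mathbf E[\mathrm{loss}_j]$, condition on $N_j$, the number of the $3k$ arrivals landing in copy $j$; since $N_j\sim\mathrm{Bin}(3k,1/k)$ we have $\Pr[N_j=3]\to\tfrac92 e^{-3}$, so $\Pr[N_j=3]\ge c_0>0$ for large $k$, and as $\mathrm{loss}_j\ge0$ it suffices to prove $\mathbf E[\mathrm{loss}_j\mid N_j=3]\ge\tfrac19$. This is the heart of the matter and rests on the same idea as the $26/27$ bound: at the first arrival into copy $j$ the algorithm cannot tell which of two ``opposite'' futures it is in. Concretely, let $\tau$ be the step of that first arrival and condition on its type being $x_j$. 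Given the history $R_1,\dots,R_\tau$ and the algorithm's coins, its action $\phi\in\{a_j,c_j,\mathrm{skip}\}$ on this arrival is determined, while $R_{\tau+1},\dots,R_{3k}$ remain i.i.d.\ uniform over $\reqs$ and hence independent of $\phi$; conditioning further on $N_j=3$, the two remaining copy-$j$ arrivals are i.i.d.\ uniform over $\{x_j,y_j,z_j\}$, independently of $\phi$. A direct check on the $6$-cycle shows that in the realization where copy $j$ sees $(x_j,y_j,y_j)$ every matching containing $(x_j,a_j)$, and every matching missing $x_j$, has size $\le2$ while $\OPT_j=3$; symmetrically for $(x_j,z_j,z_j)$ with $a_j$ replaced by $c_j$. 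Thus $\mathrm{loss}_j\ge1$ whenever the two later copy-$j$ arrivals are $(y_j,y_j)$ and $\phi\ne c_j$, and whenever they are $(z_j,z_j)$ and $\phi\ne a_j$; since each pattern has conditional probability $\tfrac19$ and is independent of $\phi$,
\[
\mathbf E[\mathrm{loss}_j\mid N_j=3,\ R_\tau=x_j]\ \ge\ \tfrac19\Pr[\phi\ne c_j]+\tfrac19\Pr[\phi\ne a_j]\ \ge\ \tfrac19,
\]
because $\Pr[\phi\ne c_j]+\Pr[\phi\ne a_j]=2-\Pr[\phi=c_j]-\Pr[\phi=a_j]\ge1$. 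By the cyclic symmetry of the $6$-cycle the same holds with $R_\tau\in\{y_j,z_j\}$ (or, avoiding symmetry, $\mathbf E[\mathrm{loss}_j\mid N_j=3]\ge\tfrac13\cdot\tfrac19$ directly), giving $\mathbf E[\mathrm{loss}_j\mid N_j=3]\ge\tfrac19$ and hence $\mathbf E[\mathrm{loss}_j]\ge c_0/9$. Keeping track of constants ($c_0\to\tfrac92 e^{-3}$) yields an upper bound of roughly $1-\tfrac{e^{-3}}{6}\approx0.99$ on the expected approximation factor, matching the bound quoted in the introduction.

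The main obstacle is the probabilistic bookkeeping in the previous paragraph: making rigorous that an arbitrary adaptive randomized algorithm, viewed inside one copy, behaves like an online algorithm on a lone $6$-cycle, and in particular that its first decision there is independent of that copy's later arrivals. The correct order of conditioning — fix the history and the coins, invoke the i.i.d.\ future, and only then condition on $N_j=3$ — is what makes the independence claim valid, and must be spelled out carefully. The remaining ingredients — the decomposition $\OPT=\sum_j\OPT_j$, the deterministic bound $\OPT\le3k$, the inequality $\mathbf E[X/\OPT]\ge\mathbf E[X]/(3k)$ for $0\le X\le\OPT$, and the binomial estimate for $\Pr[N_j=3]$ — are routine.
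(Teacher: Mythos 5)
Your proof is correct, and it uses the same family of hard instances as the paper ($k$ disjoint $6$-cycles with uniform arrivals and $n=3k$), but the analysis takes a genuinely different route. The paper's appendix argues with high probability: it estimates, via Azuma and Chernoff bounds, the fractions $\gamma_1,\gamma_2,\gamma_3,\gamma_+$ of cycles receiving $1,2,3$, or more arrivals, credits the algorithm with the optimum on every cycle except those receiving exactly $3$ arrivals, and on those invokes the single-cycle $(x,y,y)$ argument (with the ``wlog the first arrival is matched to $a$'' symmetry) to conclude $\ALG/\OPT\simeq(6e^3-23)/(6e^3-22)\approx 0.9898$ w.h.p., which then implies the bound on the expected factor. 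You instead lower-bound the expected loss $E[\OPT_j-\ALG_j]$ in each copy by conditioning on $N_j=3$ and on the first copy-$j$ arrival, and convert this into a bound on $E[\ALG/\OPT]$ through the deterministic inequality $\OPT\le 3k$; this dispenses with all concentration arguments, and the observation $\Pr[\phi\neq a_j]+\Pr[\phi\neq c_j]\ge 1$ applied to the two symmetric bad futures $(y_j,y_j)$ and $(z_j,z_j)$ handles arbitrary adaptive randomized algorithms (including the option of leaving the first arrival unmatched) more explicitly than the paper's ``wlog'' step. The trade-offs: the paper's route yields a high-probability statement and a marginally better constant ($\approx 0.990$ versus your $\approx 1-e^{-3}/6\approx 0.992$, both immaterial to the theorem as stated), while yours is shorter, needs only linearity of expectation plus a Poisson-type lower bound on $\Pr[N_j=3]$, and makes the conditioning/adaptivity bookkeeping---the point you flag as the main obstacle---fully rigorous.
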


\section{Offline Algorithms for Online Matching}
\label{sec:alg}

In this section, we present our improved online algorithms guided by
offline solutions.  Before stating the improved approximation result,
we ``warm up'' with a simple, natural algorithm that uses the idea of
computing an offline solution to ``guide'' our online choices.  This
algorithm will only achieve a $1 - {1 \over e}$-approximation (which
is tight). The proof of this part illustrates the framework
we will use in the second section to beat $1-{1 \over e}$; however we
will need a new idea to achieve this---namely, the use of a {\em second} offline solution.

\subsection{``Suggested Matching'' Algorithm: a $1-{1\over e}$-Approximation.}\label{sec:onematchingalg}
The {\em suggested matching} algorithm is a first attempt
at the approach of using an offline solution for online matching.  In
this algorithm, we simply find a maximum matching in the graph we ``expect''
to arrive, then restrict our online choices to this matching.

\delineated{Offline Algorithm} We will describe this algorithm more formally in terms of
the standard characterization
of $b$-matching as a max-flow problem, since we will later use this flow graph
explicitly to bound $\opt$.
Given an instance $\instance=(\G(\ads, \req, \E),\dist, n)$ of the
problem, we will find a max-flow in a graph $\FG$ constructed from $G$
as follows:
define a new source node $s$ and an edge $(s, a)$ with capacity $1$ to
all $a \in \ads$, direct all edges in $E$ from $\ads$ to $\req$, and
add a sink node $t$ with edges $(i, t)$ from all $i \in \req$ with
capacity $e_i$.
Let $f_{ai} \in \{0, 1\}$ be the flow on edge $(a, i)$ in this max
flow (since all the capacities are integers, we may assume that the
resulting flow is integral~\cite{schrijver}).  For ease of notation,
we say $f_{ai} = 0$ if edge $(a,i) \not \in E$.

\delineated{Online Algorithm} When an impression $i' \in \draws{i}$ arrives online, we choose a random
ad $a'$ according to the distribution defined by the flow; i.e., the
probability of choosing $a'$ is $f_{a'i} \over
e_i$.
(Note that if $\sum_a f_{ai} < e_i$ there is some probability that no
$a'$ is chosen.)
If $a'$ is already taken, we do not match $i$ to any
ad.\footnote{Clearly, making an arbitrary available match is always as
good (and in some cases better) than doing nothing; we present the
algorithm this way for ease of presentation.}

\newcommand{\mads}{\ads^*}

\delineated{Bounding $\alg$}
The performance of this algorithm is easily characterized with high
probability in terms of the computed max-flow.
Define $F_a = \sum_i f_{ai}$, and note that $F_a \in \{0, 1\}$; this indicates whether ad $a$ was chosen in the max flow.
Let $\mads = \{ a \in \ads : F_a = 1\}$.
When an impression $i \in \Draws$ arrives online, a particular ad $a : f_{a, i} = 1$ has probability $1 / e_i$ of being chosen by the online algorithm;
since each impression $i$ has probability $e_i / n$ of arriving, we conclude that
each $a \in \mads$ has probability $1/n$ of being chosen by the online algorithm upon each arrival.
Thus, to bound the total number of ads chosen we have a balls-in-bins
problem with $n$ balls and $n$ bins, and we are interested in
lower-bounding the number of bins (among a subset of size $|\mads|$) that have at least one ball.  Applying concentration results for balls-in-bins (Fact~\ref{thm:bibsubset}), we get that with probability $1 - e^{-\Omega(n)}$,
$
\ALG \geq ( 1 - {1 \over e} ) |\mads| - \epsilon n.
$

\delineated{Bounding $\opt$}
To bound the optimal solution, we will construct a cut in the
realization graph $\RG = (\ads, \Draws, \RE)$ using a min-cut of $\FG$
(constructed using the max-flow found by the algorithm) as a
``guide.''
Let $(S, T)$ be a min $\st$ cut in the graph $\FG$ using the canonical
``reachability'' cut in $\FG$; i.e., $S$ is defined as the set of nodes
reachable from $s$ using paths in the residual graph after sending the
flow $f$ found by the algorithm.  This is always a min-cut.\cite{schrijver}
Let $\ads_S = \ads \cap S$ and define $\ads_T$, $\req_S$ and $\req_T$
similarly.

We claim that there are no edges in $\E$ from $\ads_S$ to $\req_T$;
suppose there is such an edge $(a, i)$.  Then, $a$ much be reachable
from $s$ since $a \in S$, but $i$ must not be reachable since $i \in
T$.  This implies that there is no residual capacity along $(a,i)$;
i.e., $f_{ai} = 1$.  However this also implies that there is no residual
capacity along $(s, a)$ since $(s,a)$ is the only edge entering $a$
and it has capacity $1$, and that there is no other flow leaving $a$.
This implies that $a$ is not reachable in the residual graph, a
contradiction.
Thus the only edges in the cut $(S,T)$ are from $s$ to $A_T$ (capacity
1) and from $i \in \req_S$ to $t$ (capacity $e_i$).
We may conclude using max-flow min-cut that
$|\mads| = \sum_a F_a = |\ads_T| + \sum_{i \in \req_S} e_i.$

Now consider the ``realization'' graph $\RG = (\ads, \Draws, \RE)$,
and define a max-flow instance $\FRG$ whose solution has size equal to
the maximum matching in $\RG$; i.e., create a source $s$ with edges to
all $a \in \ads$, direct edges of $\RE$ toward $\Draws$, and create a
sink $t$ with edges from all $i' \in \Draws$.  Set the capacity of every
edge to one.
Note that any $\st$ cut in $\FRG$ is a bound on $\OPT$.

We define an $\st$ cut in $(\RS, \RT)$ in $\FRG$ as follows.
Let
$\Draws_S = \cup_{i \in \req_S} \draws{i}$ and
$\Draws_T = \cup_{i \in \req_T} \draws{i}$.
Define $\RS = \ads_S \cup \Draws_S$ and $\RT = \ads_T \cup \Draws_T$.
Note that since there are no edges from $\ads_S$ to $\req_T$ in $\FG$, there are also no edges
from $\ads_S$ to $\Draws_T$ in $\FRG$.
Thus the size of the cut $(\RS, \RT)$ is equal to $|\Draws_S| + |\ads_T|$.
An online impression ends up in the set $\Draws_S$ with probability
$\sum_{i \in \req_S} e_i / n$, independent of the other impressions.
Using a Chernoff bound, we can conclude that for any $\epsilon > 0$, with probability $1- e^{-\Omega(n)}$ (over the scenarios), the size of the cut (and therefore $\OPT$) obeys
$
\OPT \leq |\ads_T| +  \sum_{i \in \req_S} e_i + \epsilon n = |\mads| + \epsilon n.
$

\delineated{Tightness of the Analysis}
Consider a special case of the online matching problem
$\instance(\G, \dist, n)$ where $e_i=1$ for each $i\in \req$ and
the underlying graph $\G$ is a complete
bipartite graph. The algorithm will find a perfect
matching between $\req$ and $\ads$, and so
each ad is matched with probability at least
$1-{1\over e}$. Using Fact~\ref{thm:bibsubset}, the algorithm achieves $\approx (1-{1\over e})n$ with
high probability. However, the optimum is $n$. Therefore:


\begin{theorem}
\label{thm:mainfrac}
The approximation factor of the {\em suggested matching} algorithm is $1-{1\over e}$
with high probability,
and this is tight, even in expectation.
\end{theorem}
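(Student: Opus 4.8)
The plan is to assemble the theorem from the two high-probability estimates already derived in this subsection, and then to exhibit the complete-bipartite instance as a matching lower bound. For the upper bound on the approximation factor I would first restate the two facts proved above: with probability $1 - e^{-\Omega(n)}$ the balls-in-bins analysis via Fact~\ref{thm:bibsubset} gives $\ALG \geq (1 - 1/e)|\mads| - \epsilon n$, and with probability $1 - e^{-\Omega(n)}$ the cut $(\RS,\RT)$ in $\FRG$ together with a Chernoff bound on $|\Draws_S|$ gives $\OPT \leq |\mads| + \epsilon n$. A union bound places us on the intersection of these two events, where
\[
\ALG \;\geq\; (1 - 1/e)|\mads| - \epsilon n \;\geq\; (1 - 1/e)(\OPT - \epsilon n) - \epsilon n \;=\; (1 - 1/e)\,\OPT - (2 - 1/e)\,\epsilon n .
\]
Since $\epsilon > 0$ is arbitrary and the additive $O(\epsilon n)$ term is negligible against $\OPT$ in the only regime where the ratio is meaningful (namely $\OPT = \Omega(n)$, which already follows from $\OPT \geq \ALG \geq (1-1/e)|\mads| - \epsilon n$ once $|\mads| = \Omega(n)$; when $|\mads| = o(n)$ both $\ALG$ and $\OPT$ are $o(n)$), this yields an approximation factor of $1 - 1/e$ with high probability. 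Taking expectations of the same inequality gives $E[\ALG] \geq (1-1/e)E[\OPT] - o(E[\OPT])$, which handles the positive direction ``in expectation'' as well.

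For tightness I would use the instance singled out at the end of the subsection: $\G = K_{n,n}$ with $e_i = 1$ for every $i \in \req$ (so $m = n$) and $k = n$. Here the max-flow $f$ is a perfect matching between $\req$ and $\ads$, so $\mads = \ads$ and $|\mads| = n$; moreover, since each $e_i = 1$, when a copy of type $i$ arrives the online algorithm deterministically offers it to the unique ad $a$ with $f_{ai} = 1$. Consequently an ad $a$ is matched precisely when its partner impression type is drawn at least once among the $n$ i.i.d. draws, which is exactly the ``number of non-empty bins'' instance of balls-in-bins; hence $\ALG = (1 - 1/e \pm o(1))\,n$ with high probability (using both directions of Fact~\ref{thm:bibsubset}) and $E[\ALG] = n\bigl(1 - (1 - 1/n)^n\bigr) \to (1 - 1/e)\,n$. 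On the other hand $\G$ is complete bipartite with $n$ ads and exactly $n$ arriving impression copies, so a perfect matching always exists and $\OPT = n$ deterministically. Therefore $\ALG/\OPT = 1 - 1/e \pm o(1)$ with high probability, and likewise $E[\ALG]/E[\OPT] = 1 - (1-1/n)^n \to 1 - 1/e$, matching the upper bound and completing the proof.

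There is no genuinely hard step here: everything reduces to combining results proved earlier in the subsection. The only points that need a little care are the bookkeeping that turns the additive $\epsilon n$ slacks into a $1 - o(1)$ multiplicative loss (and the observation that the ratio is non-trivial only when $\OPT = \Omega(n)$), and, in the complete-bipartite instance, the fact that the online algorithm's behavior is \emph{exactly} the non-empty-bins process, so that the $1 - 1/e$ bound on $\ALG$ is attained rather than merely an upper bound. If I wanted the cleanest formulation I would state the positive direction as $\ALG \geq (1 - 1/e - o(1))\OPT$ with high probability, which is precisely what the combination of the two estimates delivers.
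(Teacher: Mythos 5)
Your proposal is correct and follows essentially the same route as the paper: it combines the high-probability lower bound $\ALG \geq (1-\tfrac 1 e)|\mads| - \epsilon n$ from Fact~\ref{thm:bibsubset} with the cut-based upper bound $\OPT \leq |\mads| + \epsilon n$ via a union bound, and uses the complete bipartite instance with $e_i=1$ (where the suggested matching is a perfect matching, $\ALG$ is exactly the non-empty-bins count, and $\OPT = n$) for tightness both with high probability and in expectation. Your explicit caveat that the ratio is only meaningful when $\OPT = \Omega(n)$ matches (and slightly sharpens) the paper's implicit treatment of the additive $\epsilon n$ slack.
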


\ignore{
\begin{proof}
Combining Equations~\eqref{eq:fracoptbound} and~\eqref{eq:fracmfmc} we get
$
\OPT \leq  \sum_a F_a + \epsilon' n
$
with probability $1 - e^{-\Omega(n)}$ for any $\epsilon' > 0$.
Now applying Equation~\eqref{eq:fracalgbound}, and making $\epsilon' n$ sufficiently small (using $\OPT = \Omega(n))$\todo{I'm cheating a bit here since we would also need to establish $\ALG = \Omega(n)$ first but i didn't want to get bogged down in epsilons here...  anyone have a slick way to phrase this?},
we get
$$
{\alg \over \opt} - \epsilon \leq {\sum_{a\in \ads}1-e^{-F_a} \over \sum_{a \in \ads} F_a}.
$$
Let $C = \sum_a F_a$. Since $0 \leq F_a \leq 1$, we can characterize this bound as
the solution to a mathematical program:

\begin{eqnarray*}
\min & \sum_{a\in A} (1-e^{-F_a}) &  \\
s.t.& \sum_{a\in A} F_a = 1 & \\
& 0\le F_a\le 1 & \forall a\in A
\end{eqnarray*}

This mathematical program can be solved analytically. Consider the
vector $\Phi$ of values $F_1,\ldots,F_{|A|}$ in nonincreasing order of
$F$'s, and let $f(\Phi)=\sum_{a\in A} (1-e^{-F_a})$. For any vectors
$\Phi_1$ and $\Phi_2$ subject to $||\Phi||_1 = 1$, if $\Phi_1$
majorizes $\Phi_2$, then clearly $f(\Phi_1)\geq f(\Phi_2)$. Since the
uniform vector $\bar{\Phi}=[1/|A|,\ldots,1/|A|]$ is majorized by all
the vectors, $f(\bar{\Phi})=|A|(1-e^{-1/|A|})$ is the minimum value
attainable. When $|A|=1$, $f(\bar{\Phi})=1-1/e$. We derive that
${df(\bar{\Phi})\over d|A|}= 1-e^{-1/|A|}+|A|(-e^{-1/|A|} \times 1/|A|^2) =
1-e^{-1/|A|}-e^{-1/|A|}/|A|$.  Now, $1-e^{-1/|A|}-e^{-1/|A|}/|A| =
\geq 0$ because multiplying by $e^{1/|A|}$, we get $e^{1/|A|} \geq
1+1/|A|$ which follows from the Maclaurin series expansion of
$e^x$. Thus, $df(\bar{\Phi})/d|A| \geq 0$ and this lets us conclude
that the solution to the mathematical program is attained at
$1-{1\over e}$.

\end{proof}
}

\subsection{``Two Suggested Matchings'' (TSM) Algorithm: Beating $1-{1\over e}$.}\label{mainresult}
To improve upon the {\em suggested matching} algorithm, we will
instead use {\em two} disjoint (near-)matchings to guide our
online algorithm.  To find these matchings, we boost the capacities of the flow graph and then decompose the resulting solution into disjoint solutions.
The second solution allows to to break the $1- {1 \over e}$ barrier and prove:

\begin{theorem}
\label{thm:mainuniform}
For any $\epsilon > 0$,
with probability at least $1 - e^{-\Omega(n)}$,
as long as $\OPT = \Omega(n)$,
the {\em two suggested matchings} algorithm achieves approximation ratio
$$
\frac{\ALG}{\OPT} - \epsilon
\; \geq \; \alpha := \frac{ 1 - \tfrac{2}{e^2} }{ \tfrac 4 3 - \tfrac{2}{3e} }
\; \approx \; 0.67029
\; > \; 1 - \tfrac 1 e.
$$
Moreover, this ratio is tight; specifically,
there is a family of instances for which the {\em two suggested matchings} algorithm
has expected approximation factor at most $\alpha + \epsilon$.
\end{theorem}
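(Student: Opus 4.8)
The plan is to establish, with probability $1-e^{-\Omega(n)}$, a lower bound on $\ALG$ and an upper bound on $\OPT$, each written in terms of two integer parameters of the boosted max flow, and then minimize their ratio; tightness is then shown on an explicit family. First I would fix the algorithm precisely: build the flow graph $\FG$ of Section~\ref{sec:onematchingalg} but with the source-edge capacities $(s,a)$ (and the sink-edge capacities, as needed) raised so that each advertiser may carry two units, take an integral max flow $f$, and decompose $f$ into two edge-disjoint \emph{near}-matchings $f^1,f^2$ of $\G$ (a parity mismatch on the $\req$-side is what makes these only near-matchings), arranging the decomposition using the $\req$-side capacities so that advertisers touched only by $f^2$ are negligible in number. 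Online, when a copy of impression type $i$ arrives, draw a first-choice ad from the $f^1$-distribution on $i$ and a second-choice ad from the $f^2$-distribution, assign it to the first if that ad is free, else to the second if that ad is free, else leave it unmatched. Let $p$ be the number of advertisers used by \emph{both} $f^1$ and $f^2$, and $q$ the number used by exactly one, so $|\mads|=p+q$.

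For the $\ALG$ lower bound I would reduce the count of matched advertisers to the extended balls-in-bins process of Fact~\ref{thm:bib}. Take the case $e_i\equiv 1$ for intuition: an advertiser $a$ placed by $f^1$ at type $j_1$ is matched as soon as one copy of $j_1$ arrives, and if $a$ is additionally placed by $f^2$ at type $j_2$, then $a$ is also matched whenever $\ge 2$ copies of $j_2$ arrive (the first copy claims $j_2$'s first-choice partner, the second is deflected onto $a$, and once matched $a$ stays matched). Associating to each double-advertiser the length-$2$ bin sequence $(j_1,j_2)$ with the second coordinate in the set $\bibR$, "satisfied" in the sense of Fact~\ref{thm:bib} implies "matched"; since each type lies in at most $d=O(1)$ such sequences, Fact~\ref{thm:bib} with $c=2$, $|\bibR|=1$ yields that $p(1-\tfrac{2}{e^2})$ of the double-advertisers are matched up to an additive $\epsilon n$, while Fact~\ref{thm:bibsubset} handles the advertisers used only once; the general $e_i$ case follows by running the same argument at the level of the individual matched slots of each type. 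This gives, w.h.p.,
$$
\ALG \;\ge\; \Bigl(1-\tfrac{2}{e^2}\Bigr)p + \Bigl(1-\tfrac1e\Bigr)q - \epsilon n .
$$

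The $\OPT$ upper bound is the crux. As in the warm-up I would take the canonical reachability min-cut $(S,T)$ of the boosted $\FG$; the same residual-graph argument shows there is no edge of $\E$ from $\ads_S$ to $\req_T$, and max-flow/min-cut relates $p,q$ to $|\ads_T|$ and to $\sum_{i\in\req_S}e_i$. The naive lift of $(S,T)$ to the realization graph $\FRG$ (as in Section~\ref{sec:onematchingalg}) is too weak now: a type $i\in\req_S$ with large $e_i$ would throw all $\approx e_i$ of its copies onto the source side even though $\OPT$ can route only boundedly many of them through the few $\ads_S$-neighbours. The fix is to choose the realization cut guided by, but not equal to, $(S,T)$ — moving the appropriate copies of such types to the sink side and charging them against the saturation equalities $\sum_a f_{ai}=\mathrm{cap}$ that hold on $\req_S$ together with the capacity-$2$ equalities that hold on $\ads_S$. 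Every term in the resulting cut value is a sum of independent indicators over the $n$ arrivals, so a Chernoff bound gives, w.h.p.,
$$
\OPT \;\le\; \Bigl(\tfrac43-\tfrac{2}{3e}\Bigr)p + \Bigl(1-\tfrac1e\Bigr)q + \epsilon n .
$$
I expect designing this cut and verifying its concentrating value — in particular pinning the coefficient $\tfrac43-\tfrac{2}{3e}$ on $p$ — to be the main obstacle, since that coefficient is exactly what lets the $1-\tfrac1e$ barrier break.

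Combining the two bounds on their common $1-e^{-\Omega(n)}$ event and using $\OPT=\Omega(n)$ to absorb the additive $\epsilon n$ terms into a multiplicative $\epsilon$,
$$
\frac{\ALG}{\OPT}-\epsilon \;\ge\; \frac{(1-\tfrac{2}{e^2})p+(1-\tfrac1e)q}{(\tfrac43-\tfrac{2}{3e})p+(1-\tfrac1e)q} \;\ge\; \min\Bigl\{\tfrac{1-2/e^2}{\,4/3-2/(3e)\,},\;1\Bigr\} \;=\; \alpha ,
$$
since a ratio of two linear forms with positive denominator over $p,q\ge 0$ is minimized at an extreme ray of the cone, and the two rays give $\alpha$ and $1$. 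For tightness I would take a family made of many disjoint copies of one fixed small gadget on which the boosted flow is forced to consist entirely of double-advertisers (so $q=0$), $\E[\OPT]$ meets the cut bound, and $\E[\ALG]$ meets the balls-in-bins bound — both meetings being exact because Facts~\ref{thm:bibsubset} and~\ref{thm:bib} are tight on such a regular structure — giving $\E[\ALG]\le(\alpha+\epsilon)\,\E[\OPT]$.
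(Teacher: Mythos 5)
Your high-level plan does mirror the paper's (two edge-disjoint near-matchings from a boosted flow, balls-in-bins for the algorithm, a cut in the realization graph guided by the offline min-cut, then a ratio minimization), but the step you yourself flag as "the main obstacle" — pinning the coefficient $\tfrac43-\tfrac{2}{3e}$ on the doubly-used advertisers — is precisely the technical heart of the theorem, and your proposal contains no argument for it. In the paper this requires a chain of specific steps: a surgery on the reachability cut of the boosted graph (any $i\in\req\cap T$ with two neighbors in $\ads\cap S$ is moved to $S$) after which the cut edges $\cedges$ form a matching with $\cedges\subseteq\fedges$; a second surgery on the ad side of the realization cut, giving $\OPT\le|\req_S|+|\ads_T|+(1-\tfrac1e)|\cedges|+\epsilon n$ (the $(1-\tfrac1e)|\cedges|$ term is a balls-in-bins count over impression types, handled by Azuma, not a sum of independent indicators over arrivals); and, crucially, the purely combinatorial Lemma~\ref{lemma:deltabound}, $|\cedges|\le\tfrac23|\adss{BR}|+\tfrac43|\adss{BB}|+|\adss{B}|+\tfrac13|\adss{R}|$, proved by showing that any two cut edges must be separated by at least two flow edges along their path or cycle. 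Only the combination $\OPT\le\tfrac12|\fedges|+(\tfrac12-\tfrac1e)|\cedges|$ with that lemma produces $\tfrac43-\tfrac{2}{3e}$. Asserting the final inequality $\OPT\le(\tfrac43-\tfrac{2}{3e})p+(1-\tfrac1e)q+\epsilon n$ without an analogue of this counting lemma leaves the theorem unproven.

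Two further steps fail as written. First, you cannot arrange the decomposition so that advertisers touched only by $f^2$ are negligible: an even-length flow path with advertisers at both ends forces (given that each advertiser carries at most one edge of each near-matching) one endpoint into $f^2$ only, and such components can be a constant fraction of the instance. Concretely, if $\fedges$ is a union of length-$2$ paths $a-i-a'$ with $e_i=1$, each component contributes about $(1-\tfrac1e)+(1-\tfrac2e)\approx 0.896$ matched ads in expectation, not $2(1-\tfrac1e)\approx 1.264$, so your claimed bound $\ALG\ge(1-\tfrac{2}{e^2})p+(1-\tfrac1e)q-\epsilon n$ is simply false there; the paper avoids this by tracking four classes $\adss{BR},\adss{BB},\adss{B},\adss{R}$, using $|\adss{B}|\ge|\adss{R}|$ to average to the coefficient $1-\tfrac{3}{2e}$, and checking that the resulting third ratio $(1-\tfrac{3}{2e})/(1-\tfrac1e)\approx 0.709$ still exceeds $\alpha$ (so the constant survives, but not via your two-parameter accounting). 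Second, the tightness family cannot consist of disjoint copies of a gadget made only of doubly-used advertisers: per gadget $\OPT$ is capped by the number of its advertisers, so e.g.\ disjoint $6$-cycles give $\ALG/\OPT\approx(1-\tfrac{2}{e^2})\cdot 3$ against $\OPT\le 3$, a ratio around $0.73>\alpha$. The paper's tight instance attaches complete bipartite graphs between $K$ and $X$ and between $L$ and $W$ to the $6$-cycles exactly so that the optimum can route cut-type arrivals to fresh advertisers and reach $n(1-\tfrac{1}{2e})$, matching the cut upper bound while the algorithm stays at $\tfrac{3n}{4}(1-\tfrac{2}{e^2})$; without that extra structure the claimed expected ratio $\alpha+\epsilon$ is not attained.
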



Throughout the section, until the final proof of
Theorem~\ref{thm:mainuniform}, we assume $e_i = 1$ for all $i \in
\req$, which also implies $m = n$.  Extending to integer $e_i$ is a
simple reduction to this case.

\subsubsection{The TSM Algorithm.}

In this algorithm, we construct a {\em boosted flow graph} $\FG$,
built from $\G$ in the standard
reduction of matching to max-flow; i.e., create a source $s$ with
edges to all $a \in \ads$, direct the edges of $G$ towards nodes in
$\req$, and create a sink $t$ with edges from all $i \in \req$.
However, we set the capacities of the edges differently than in the
max-flow reduction: (i) Edges $(s, a)$ from the source get capacity 2,
(ii) edges $(a, i) \in \E$ get capacity 1, and
(iii) edges $(i, t)$ from $\reqs$ to $t$ get capacity 2.

We find a max-flow in this graph from $s$ to $t$.  Since all the
capacities are integers, we may assume that the resulting flow is
integral~\cite{schrijver}.  Let $\fedges$ be the set of edges $(a,i)
\subseteq \E$ with non-zero flow on them, which must be unit flow.  Since the
capacities of edges $(s, a)$ and $(i, t)$ are all 2, we know that the
graph induced by $\fedges$ is a collection of
paths and cycles.  Using this structure, we assign colors blue and red to the edges of $\fedges$ as follows:
\begin{itemize}\addtolength{\itemsep}{-0.5\baselineskip}
\item
Color the cycle edges alternating blue and red.
\item
Color the edges of the odd-length paths alternating blue and red, with more blue than red.
\item
For the even-length paths that start and end with nodes $a \in \ads$, alternate blue and red.
\item
For the even-length paths that start and end with impressions $i \in \req$, color the first two edges blue, and then alternate red, blue, red, blue, etc., ending in blue.
\end{itemize}
Note that all $i \in \req$ are incident to either no colored edges, one blue edge, or
a blue and a red edge.

The TSM algorithm for serving online ad
impressions is simple: For each $i \in \imp$, the first time $i$ arrives try the blue edge; the second time $i$ arrives try the red edge.  More
formally, for all $i \in \imp$ maintain a count $x_i$ of the number of impressions $i' \in \draws{i}$ that have arrived so far.
When $i' \in \draws{i}$ arrives:
if $x_i = 0$, set $a'$ to be the ad along $i$'s blue edge (if $i$ has a blue edge);
if $x_i = 1$, set $a'$ to be the ad along $i$'s red edge (if $i$ has a red edge).
Now assign $i$ to $a'$ if $a'$ is unassigned.
If this $a'$ is already assigned, or if $x_i > 1$, do not make an assignment.\footnote{A slight improvement to this algorithm is to try to match along the red edge if matching along the blue edge fails; we do not make use of this in the analysis so we leave it out for clarity.}

\newcommand{\biggif}{}

\subsubsection{Performance of the TSM Algorithm.}
To analyze the performance of this algorithm, we first derive a lower
bound on the number of ads assigned during the run of the algorithm.
We do so in terms of the incidence pattern of the different ads with respect to the edges $\fedges$.
Specifically, let $\adss{BR}$ be the ads that are incident to a blue
and a red edge, and $\adss{B}$ be the ads that are incident to only a
blue edge.  Similarly define $\adss{BB}$ and $\adss{R}$.  We have
\begin{eqnarray}
\label{eq:flow1}
|\fedges| = 2 \adss{BR} + 2 \adss{BB} + \adss{B} + \adss{R}.
\end{eqnarray}
Consider some $a \in \adss{B}$ with blue edge $(a, i)$.  The event
that $a$ is ever chosen is exactly the event that some $i' \in
\draws{i}$ is ever drawn from $D$, since then we will choose $a$ (and
no other impression will choose $a$).
Since $e_i = 1$, this is exactly the probability that a particular
bin
is non-empty in a balls-in-bins problem with $n$ balls (the online impressions), and $m=n$ bins (the impression types $\req$).
Applying Fact~\ref{thm:bibsubset},
we get that with high probability
the number of ads chosen from $\adss{B}$  is at least
$
|\adss{B}| (1 - \frac{1}{e}) - \epsilon n.
$
Now consider some $a \in \adss{BR}$ with blue edge $(a, i_b)$ and red
edge $(a, i_r)$.
If $|\draws{i_b}| \geq 1$, or if $|\draws{i_r}| \geq 2$, then $a$ will definitely be chosen.
Thus we can apply Fact~\eqref{thm:bib} with $n$ balls, $m=n$ bins, $c = 2$,
bin sequences equal to the neighborhood sets of $\adss{BR}$ along the blue and red edges (ordered blue, red),
$d = 2$ (since each impression is incident to at most 2 edges of $\fedges$),
and $\bibR$ set to the second (red) bin of the bin sequence.
We conclude that
with high probability, the number of ads chosen from $\adss{BR}$ is at least
$
|\adss{BR}|  ( 1 - \frac{2}{e^2}  ) - \epsilon n.
$
Similar reasoning gives bounds with coefficients of $( 1 - \frac{1}{e^2} )$ for $\adss{BB}$ and $( 1 - \frac 2 e)$ for $\adss{R}$.
We may conclude that with high probability (over the scenarios),
$
\ALG \geq
\biggif ( 1 - \frac{1}{e^2} \biggif ) |\adss{BB}| +
\biggif ( 1 - \frac{2}{e^2} \biggif ) |\adss{BR}| +
\biggif ( 1 - \frac 1 e \biggif ) |\adss{B}| +
\biggif ( 1 - \frac 2 e \biggif ) |\adss{R}| - 4 \epsilon n.
$
Note that since $|\ads_B| \geq |\ads_R|$, we can also assert
\begin{eqnarray}
\label{eq:algbound}
\ALG \geq
\biggif ( 1 - \tfrac{1}{e^2} \biggif ) \adss{BB} +
\biggif ( 1 - \tfrac{2}{e^2} \biggif ) \adss{BR} +
\biggif ( 1 - \tfrac{3}{2e} \biggif ) (\adss{B} + \adss{R}) - 4 \epsilon n.
\end{eqnarray}

\subsubsection{Bound on the optimal solution.}
Let $(S, T)$ be a particular min $\st$ cut of the flow graph $\FG$
defined as follows.
First start with the canonical ``reachability''
min $\st$ cut of the flow graph $\FG$, where $S$ is defined as the set
of nodes reachable from $s$ in the residual graph $\FG$ left after
finding the max-flow $\fedges$.
Then, we do a small bit of ``surgery'' to this cut:  for all $i \in \req \cap T$, if $i$ is incident to more than one
$a \in \ads \cap S$, we move $i$ over to $S$.  Note that this does not
increase the value of the cut, since we save at least 2 for the two
edges from $\ads \cap S$, and pay exactly 2 for the edge $(i, t)$.
Let $\ads_S = \ads \cap S$, and define $\ads_T$, $\reqs_S$ and $\reqs_T$ similarly.
Let $\cedges$ be the
set of
 edges $(a,i) \in E$ that cross the cut (from $\ads_S$ to
$\reqs_T$).

Some observations:
{\bf (i)} We have $\cedges \subseteq \fedges$, since otherwise, if some $(a, i)
\in \cedges$ has no flow across it, then $i$ would be reachable from
$s$, and would not be in the set $T$.  (And we did not introduce any such
edges in our surgery.)
{\bf (ii)}
All $i \in \req_T$ have at most one incident edge in $\cedges$ (follows from the surgery).
{\bf (iii)}
All $a \in \ads_S$ have at most one incident edge in $\cedges$. To see
this, suppose it had two such edges (it cannot have more than 2 since
$\cedges \subseteq \fedges$).  Then, since $a$ is reachable from $s$
(since it is in $S$), it must have either residual capacity from $s$
directly, or residual capacity from $\req_S$; but it cannot have
either, since $(s, a)$ is saturated and both flow edges from $a$ go to
$\req_T$.

Let $\ads_\delta$, $\req_\delta$ be the ads and impressions,
respectively, that are incident to edges in $\cedges$.  We may
conclude from the observations above that the graph $(\ads_\delta,
\req_\delta, \cedges)$ induced by $\cedges$ is a matching.
The min-cut of $\FG$ is made up of the edges $\cedges$, the $|\ads_T|$ edges
from $s$ to $\ads_T$ (with capacity 2), and the $|\reqs_S|$ edges from
$\reqs_S$ to $t$ (also capacity 2).  Thus, by max-flow-min-cut, we have
\begin{eqnarray}
\label{eq:flow2}
|\fedges| = 2 (|\ads_T| + |\reqs_S|) + |\cedges|.
\end{eqnarray}
We are interested in bounding the value of the optimal matching in the
realization graph $\RG = (\ads, \Draws, \RE)$.  To do this, we will
use the min-cut $(S, T)$ of the graph $\FG$ as a ``guide'' to
construct a (not necessarily min) cut in a flow graph built from $\RG$, and prove a
high-probability bound on the size of this cut.

More precisely, we let $\FRG$ be a directed version of $\RG$, constructed as before
with a source and a sink, and edges corresponding to $\RG$; but now we put capacity 1 on all edges.
Note that any $\st$ cut in this graph constitutes an upper bound on
$\OPT$, the maximum matching in $\RG$.
We construct such a cut $(\RS, \RT)$ as follows.
We let $\Draws_S = \cup_{i \in \req_S} \draws{i}$ and $\Draws_T =
\cup_{i \in \req_T} \draws{i}$.
For the ads, we will
use almost the same partition $(\ads_S, \ads_T)$ as in $\FG$ but we will perform
some ``surgery'' on this partition as well.
Let $\ads^*_\delta \subseteq \ads_\delta$ be the set of ads $a \in \ads$
that are incident (in
$\FRG$) to some $i' \in \draws{i} \subseteq \Draws_T$.
Note that $i \in \req_\delta$ and $(a, i) \in \cedges$.
We set
$\RS = \Draws_S \cup (\ads_S \setminus \ads^*_\delta)$ and
$\RT = \Draws_T \cup \ads_T \cup \ads^*_\delta$.

Now we will measure the size of the cut $(\RS, \RT)$ in $\FG$.  We pay 1 for
each $a \in \Draws_S$, $i \in \ads_T$ and $i \in \ads^*_\delta$.  But note
that there are no edges in $\FG$ from $\ads \cap \RS$ to $\Draws_T$, since we got rid of them in our surgery.
Thus we have
$
\OPT \leq |\Draws_S| + |\ads_T| + |\ads^*_\delta|.
$

Using a Chernoff bound, with probability $1- e^{-\Omega(n)}$
we have
$|\Draws_S| \leq |\req_S| + \epsilon n$ for any $\epsilon > 0$.
To bound $|\ads^*_\delta|$, consider some $a \in \ads_\delta$, and the
impression $i \in \req_\delta$ along the edge $(a, i)$ in the matching
$(\ads_\delta, \req_\delta, \cedges)$.
The ad $a$ appears in $\ads^*_\delta$ iff impression $i$ is drawn during
the run of the algorithm.
Thus we have a balls-in-bins problem with $n$ balls, $m = n$ bins,
uniform bin probabilities and a bin subset of size $|\ads^*_\delta|$, and
we are concerned with an upper bound on the number of bins in that
subset that get at least one ball.  Using Fact~\ref{thm:bibsubset} we
may conclude that with high probability $|\ads^*_\delta| \leq (1 -
\frac 1 e) |\cedges| + \epsilon n + O(1)$.

Summarizing the previous arguments, we get, for any $\epsilon > 0$, with probability $1- e^{-\Omega(n)}$,
$
\OPT \leq |\req_S| + |\ads_T| + (1 - \frac 1 e)|\cedges| + \epsilon n.
$
Applying Equations~\eqref{eq:flow2} then~\eqref{eq:flow1}, we get
\begin{eqnarray}
\OPT
& \leq & \tfrac 1 2 |\fedges| + (\tfrac 1 2 - \tfrac 1 e)|\cedges| + \epsilon n \nonumber \\
& = & |\adss{BR}| + |\adss{BB}| + \tfrac 1 2 (|\adss{B}| + |\adss{R}|) + (\tfrac 1 2 - \tfrac 1 e)|\cedges| + \epsilon n
\label{eq:optbound}
\end{eqnarray}
In order to use this bound on $\OPT$ together with the bound on $\alg$
in Equation~\ref{eq:algbound}, we must bound the size of $\cedges$ in
terms of the sets $\adss{BR}$, $\adss{BB}$, $\adss{B}$ and $\adss{R}$.
The following lemma takes a deeper look at the two matchings
constructed by the algorithm, and their relationship to the min-cut
$(S, T)$ in $\FG$, in order to achieve this bound.

\begin{lemma}
\label{lemma:deltabound}$|\cedges| \leq \frac 2 3  |\adss{BR}| + \frac 4 3  |\adss{BB}| +  |\adss{B}| +  \frac 1 3 |\adss{R}|.$
\end{lemma}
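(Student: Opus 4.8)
Here is a plan for proving Lemma~\ref{lemma:deltabound}.

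\medskip
\noindent\emph{Reduce to a single component.} The plan is to prove the bound one connected component at a time. Write the graph $(\ads,\req,\fedges)$ as the disjoint union of its components; since $\cedges\subseteq\fedges$, each cut edge lies in exactly one component, and every ad counted on the right-hand side lies in a unique component containing at least one $\fedges$-edge. Because each $\fedges$-edge has exactly one ad endpoint, whose flow-degree is $2$ (types $\adss{BR},\adss{BB}$) or $1$ (types $\adss{B},\adss{R}$), Equation~\eqref{eq:flow1} holds restricted to the ads of any component $C$. Hence it suffices to show, for each component $C$,
\[
|\cedges\cap C| \;\le\; \tfrac13\,|\fedges\cap C| \;+\; \tfrac23\bigl(|\adss{BB}\cap C|+|\adss{B}\cap C|\bigr),
\]
since substituting the component-local form of Equation~\eqref{eq:flow1} turns the right side into exactly $\tfrac23|\adss{BR}\cap C|+\tfrac43|\adss{BB}\cap C|+|\adss{B}\cap C|+\tfrac13|\adss{R}\cap C|$; summing over $C$ then gives the lemma.

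\medskip
\noindent\emph{Component shapes and ad census.} First I would record that each $C$ is an even cycle or a path, and read off its ad-type census from the coloring rules: an even cycle has only $\adss{BR}$ ads; an odd path has one $\adss{B}$ ad (its ad endpoint) and otherwise $\adss{BR}$ ads; an even path with two ad endpoints has one $\adss{B}$, one $\adss{R}$, and otherwise $\adss{BR}$; an even path with two impression endpoints has one $\adss{BB}$ ad (the one carrying the two leading blue edges) and otherwise $\adss{BR}$.

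\medskip
\noindent\emph{How the cut sits on $C$.} The heart of the argument is to pin down how the min-cut $(S,T)$, after the surgery step, can meet $C$, using reachability in the residual graph of the boosted max-flow: (a) an ad incident to at most one $\fedges$-edge has slack on its source edge, hence lies in $\ads_S$ (these are exactly the path-endpoint ads, of type $\adss{B}$ or $\adss{R}$); (b) an ad incident to two $\fedges$-edges lies in $\ads_S$ iff one of its two $\fedges$-neighbours does; (c) every $\fedges$-edge is saturated, so an impression of $C$ can enter $S$ only through a non-flow edge from some $\ads_S$ ad, and from such an impression reachability reaches exactly its $\fedges$-neighbour ads and then stops along $C$ — so $\req_S$ restricted to $C$ is governed by the set $P_0$ of impressions of $C$ receiving such a non-flow edge; (d) after the surgery every impression in $\req_T$ has at most one neighbour in $\ads_S$, which forces each maximal run of $\req_T$-impressions along $C$ to have length at least two and to be flanked by $\req_S$-impressions (a length-one gap would have both of its $\fedges$-neighbour ads in $\ads_S$ and so would have been moved to $S$). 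Combined with $\cedges\subseteq\fedges$, the matching property of $\cedges$, and the fact that a cut edge runs from $\ads_S$ to $\req_T$, this suffices to count.

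\medskip
\noindent\emph{The count, and the crux.} It then remains to do the count on the four component types. On a cycle a cut edge occurs only at one of the two ends of a maximal $\req_T$-run; by (d) each such run has $\ge 2$ impressions and consecutive runs are separated by $\ge 1$ impression of $\req_S$, so there are at most $\tfrac23$ cut edges per impression of $C$, i.e. at most $\tfrac23|\adss{BR}\cap C|=\tfrac13|\fedges\cap C|$ overall. For the path types the same run-length argument controls the interior, and the endpoints are handled by hand: the single $\adss{B}$ or $\adss{R}$ endpoint ad is automatically in $\ads_S$ and contributes at most its one incident edge, while the single $\adss{BB}$ ad can have \emph{both} of its blue edges in the cut; tallying these against $\tfrac13|\fedges\cap C|$ leaves a surplus of exactly $\tfrac23$ for each $\adss{B}$ and each $\adss{BB}$ ad, which is the claimed per-component inequality. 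The one genuinely non-obvious point — and the expected main obstacle — is item (d) together with the cycle count: one must see clearly that the surgery is precisely what forces the $\req_T$-gaps to have length $\ge 2$, producing the decisive factor $\tfrac23$ rather than the trivial factor $1$ that comes from $\cedges$ merely being a matching. Everything else is bookkeeping over the component types and the handful of positions at which $P_0$ can meet the endpoints.
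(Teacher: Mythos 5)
Your reduction to a single component and the per\hyphenation{component}-component inequality you target are exactly equivalent to what the paper does (it, too, checks each path/cycle type against the ad census, and your census of $\adss{B},\adss{R},\adss{BB},\adss{BR}$ per component type is correct). The problems are in the two places you yourself flag as the crux. First, item (d): the claim that every maximal $\req_T$-run has length at least two is true, but your justification skips the hard case. If a flanking impression $i'$ lies in $\req_S$ only because of the surgery (so it is \emph{not} reachable in the residual graph), you cannot use reachability to conclude that the ad between $i'$ and the gap is in $\ads_S$. What you need, and do not state, is that a non-reachable impression can be adjacent to an $\ads_S$ ad only along a saturated flow edge --- a flow-free edge $(a,i')$ leaving a reachable $a$ has forward residual capacity and would make $i'$ reachable --- so a surgered impression necessarily has \emph{both} of its $\fedges$-neighbours in $\ads_S$; only with that observation does the ``length-one gap would have been moved to $S$'' argument go through. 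Second, your endpoint bookkeeping asserts that the $\adss{BB}$ ad ``can have both of its blue edges in the cut.'' It cannot: an ad whose two flow edges both go to $\req_T$ has a saturated source edge and no reachable flow-neighbour, hence is not in $\ads_S$; and two cut edges at one ad would contradict the matching property of $\cedges$ that you invoke elsewhere. Nor is this a harmless over-allowance: granting that ad two cut edges, your promised surplus of $\tfrac23$ per $\adss{BB}$ ad does not cover it (for the length-2 component $i_1\!-\!a\!-\!i_2$ the allowance is $2$ while your per-component bound is $\tfrac43$). The fix is to use the at-most-one-cut-edge-per-$\ads_S$-ad fact there.

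For comparison, the paper's proof avoids the run analysis entirely: since any edge of $\E$ from $\ads_S$ to $\req_T$ is by definition in $\cedges$, two cut edges with only one component edge between them would force that middle edge to be a third cut edge sharing endpoints with both, contradicting the matching property; hence consecutive cut edges are separated by at least two edges of the component, giving $\lfloor k/3\rfloor$ cut edges on a cycle and $\lceil k/3\rceil$ on a path in two lines, after which only your census remains. Your reachability-based route can be repaired along the lines above, but it is considerably longer, and as written it has a genuine gap at the surgery case of (d) and an incorrect step in the $\adss{BB}$ endpoint tally.
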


\begin{proof}
It suffices to show that the inequality holds for every connected
component (path or cycles) of the graph induced by $\fedges$.
We thus assume notationally that the graph induced by $\fedges$ consists
of a single such connected component.

Consider an arbitrary pair of edges $(a_1, i_1), (a_2, i_2) \in
\cedges \subseteq \fedges$.  Since the edges of $\cedges$ are
independent, $(a_1, i_1)$ and $(a_2, i_2)$ cannot occur consecutively
in this component (path or cycle); we claim further that $(a_1, i_1)$ and $(a_2,
i_2)$ must have at least two edges between them.  Suppose not, then
wlog $(a_2, i_1)$ is in the component; but since $a_2 \in \ads_S$
and $i_1 \in \req_T$ (by the definition of $\cedges$) we must have
$(a_2, i_1) \in \cedges$, contradicting the fact that the edges
$\cedges$ are independent.

If the component is a cycle of length $k$, we can use the reasoning above to conclude
that there are at most $\lfloor \frac k 3 \rfloor$ edges of $\cedges$ in
the cycle.  The ads in the cycle are all in $\adss{BR}$ and there are
exactly $\frac k 2$ of them.  Thus $|\cedges| \leq \frac 2 3 |\adss{BR}|$, which implies the inequality.

If the component is a path, we can conclude that $|\cedges| \leq \lceil \frac k 3 \rceil$
by the reasoning above---the worst case is when the path starts and
ends in a $\cedges$ edge.
We have three cases for this path, depending on the parity of its length, and (in the case of even-length paths) whether it starts and ends in $\ads$ or $\req$.
\begin{itemize}\addtolength{\itemsep}{-0.5\baselineskip}
\item
For odd paths of length $k$, by construction of the edge colors, we have one ad in $\adss{B}$ and
$\frac{k-1}{2}$ ads in $\adss{BR}$.  Thus
$
|\cedges|
\leq \biggif \lceil \frac k 3 \biggif \rceil
= \biggif \lceil \frac{2 |\adss{BR}|}{3} + \frac 1 3 \biggif \rceil
\leq \frac{2}{3} |\adss{BR}| + 1
= \frac 2 3 |\adss{BR}| + |\adss{B}|.
$

\item
For even paths of length $k$ that start and end with ads, we have
$|\adss{B}| = 1$, $|\adss{R}| = 1$ and $|\adss{BR}| = \frac k 2 - 1$.  Thus
$
|\cedges|
\leq \lceil \frac k 3 \rceil
= \lceil \frac{2 |\adss{BR}|}{3} + \frac 2 3 \rceil
$.  We bound this using a case analysis on $|\adss{BR}| \mod{3}$, as follows:
{\bf (i)}
If $|\adss{BR}| \equiv 0 \mod{3}$ then we get $|\cedges| \leq \frac 2 3 |\adss{BR}| + 1$.
{\bf (ii)}
If $|\adss{BR}| \equiv 1 \mod{3}$ then we get $|\cedges| \leq \frac 2 3 |\adss{BR}| + \frac 4 3$.
{\bf (iii)}
If $|\adss{BR}| \equiv 2 \mod{3}$ then we get $|\cedges| \leq \frac 2 3 |\adss{BR}| + \frac 2 3$.
In all cases this is less than $\frac 2 3 |\adss{BR}| + \frac 4 3 =
\frac 2 3 |\adss{BR}| + |\adss{B}| + \frac 1 3 |\adss{R}|$.

\item
For even length paths that start and end in
impressions, we have $|\adss{BB}| = 1$ and $|\adss{BR}| = \frac k 2 - 1$.
As in the previous case we can say
$
|\cedges|
\leq  \lceil \frac k 3 \rceil
= \lceil \frac{2 |\adss{BR}|}{3} + \frac 2 3 \rceil
$,
and reason by the same case analysis that
$|\cedges|$ is at most $\frac 2 3 \adss{BR} + \frac 4 3$.  This is equal to
$\frac 2 3 \adss{BR} + \frac 4 3 \adss{BB}$.
\end{itemize}
\vspace{-.65cm}
\end{proof}

\subsubsection{Proof of Theorem~\ref{thm:mainuniform}.}
We first prove the approximation ratio for $e_i = 1$.
The bounds in equations~\eqref{eq:algbound} and~\eqref{eq:optbound}
each hold with probability $1 - e^{-\Omega(n)}$, and so
using a union bound they both hold with probability $1 - e^{-\Omega(n)}$.
Using Lemma~\ref{lemma:deltabound} (ignoring the $\frac 1 3$ in
front of the $\ads_R$) and Equation~\eqref{eq:optbound}, we get
$$
\OPT \leq (\frac 4 3 - \frac{2}{3e}) \adss{BR} + (\frac 5 3 - \frac{4}{3e}) \adss{BB} + (1 - \frac 1 e)(\adss{B} + \adss{R}) + \epsilon' n.
$$
Since $\OPT = \Omega(n)$, we can choose $\epsilon'$ small enough such that when we apply
Equation~\eqref{eq:algbound} (also using $\epsilon'$) we may conclude
$$
\frac{\ALG}{\OPT} + \epsilon \geq \min \bigg \{
\frac{ 1 - \frac{1}{e^2} }{ \frac 5 3 - \frac{4}{3e} },
\frac{ 1 - \frac{2}{e^2} }{ \frac 4 3 - \frac{2}{3e} },
\frac{ 1 - \frac{3}{2e} }{ 1 - \frac 1 e} \bigg \}
 = \min \{ .735 \dots , .670 \dots , .709 \dots \}
 = \frac{ 1 - \frac{2}{e^2} }{ \frac 4 3 - \frac{2}{3e} }
\approx .670.
$$

The tightness of this analysis is proved in Section~\ref{sec:tight}.
For arbitrary integer $e_i$, we give a reduction to the case $e_i = 1$.
Given a set of instance $\instance=(\G, \dist, n)$, we reduce to a
new instance $\instance' =(\G', \dist', n)$
with $e'_i = 1$ by making $e_i$ copies of each impression type $i$.
Then, when an impression of type $i$ arrives online, ``name'' it randomly
according to one of its copies.  The resulting distribution $\dist'$ is uniform over the impression types $\req'$ in the new instance.

Let $\Draws$ be the impressions that are drawn from $\dist$ in one run
of the algorithm, and let $\Draws'$ be the resulting draws from $\dist'$.  By the arguments above, we achieve the desired bound on $\ALG /
\OPT'$ with high probability, where $\OPT'$ is with respect to
$\Draws'$; however we have $\OPT' = \OPT$, since the realization
graphs $\RG = (\ads, \Draws', \RE')$ and $\RG' = (\ads, \Draws, \RE)$
are in fact the same graph.

\subsubsection{Tightness of the analysis for the TSM  Algorithm.} \label{sec:tight}
In this section we demonstrate a family of instances for which the TSM
algorithm achieves a factor no better than $\tfrac{1-2/e^2}{4/3 - 2/(3e)}$,
thus showing that the analysis in Section~\ref{sec:alg} is tight.

The family is parameterized by $n$, which is the number of
advertisers, the number of impression types, as well as the number of
impression arrivals. We shall take $n$ to be a multiple of 4.  The set
$\ads$ of advertisers consists of the following parts: a set $K$ of
size $\frac{n}{4}$ and, for $i \in [1, \frac{n}{4}]$, advertisers
$\{u_i, v_i, w_i\}$.  The set $\reqs$ of impressions consists of the
following parts: a set $L$ of size $\frac{n}{4}$ and, for $i \in [1,
\frac{n}{4}]$, impressions $\{x_i, y_i, z_i\}$. Define $U = \{u_i:~i
\in [1,n/4]\}$, and similarly, $V, W, X, Y, Z$. Thus $\ads = K \cup U
\cup V \cup W$ and $\reqs = L \cup X \cup Y \cup Z$. Draws are from
the uniform distribution on $\reqs$.

The edges $\E$ are as follows:
{\bf (i)}
For $i \in [1, \frac{n}{4}]$, the 6-cycle
$\{u_i-x_i-v_i-y_i-w_i-z_i-u_i\}$,
{\bf (ii)}
a complete bipartite graph between $K$ and $X$, and
{\bf (iii)}
a complete bipartite graph between $L$ and $W$.

We now describe the max-flow and min-cut in $\FG$ found during the
algorithm. The only edges with (unit) flow are the edges of the
6-cycles, i.e., for $i \in [1, \frac{n}{4}]$,
$\{u_i-x_i-v_i-y_i-w_i-z_i-u_i\}$. Thus all vertices in $U, V, W, X, Y$
and $Z$ have a flow of 2 each, and the vertices in $K$ and $L$ have a
flow of $0$. The reachability cut $(S,T)$ obtained from this flow has $S
= K \cup X \cup U \cup V \cup \{s\}$ (where $s$ is the source
vertex). The flow and the cut
both have size $\tfrac{3n}{2}$.
Using Fact~\ref{thm:bibsubset}, one can easily check that the
algorithm achieves the total matching size of $\tfrac{3n}{4}
(1-\tfrac{2}{e^2})$ with high probability.

The following assignment
can be made with high probability, and is a lower bound on OPT.
{\bf (i)}
With high probability there will be $\frac{n}{4}$ draws of
impressions from $X$ (with repeats). These can be matched to
the $\frac{n}{4}$ advertisers in $K$ (in any order).
{\bf (ii)}
With high probability there will be $\frac{n}{4}$ draws of
impressions from $L$. These can be matched to the $\frac{n}{4}$
advertisers in $W$.
{\bf (iii)}
With high probability there will be $(1-\frac{1}{e})\frac{n}{4}$
unique draws of impressions from $Y$ (counting each $y_i$ only once,
even if it is drawn multiple times). For every such $y_i$, its first
draw is matched to $v_i$, and the repeat draws of $y_i$ are left
unmatched. Similarly, with high probability, there are
$(1-\frac{1}{e})\frac{n}{4}$ unique draws of $z_i$'s, and these are
matched to the corresponding $u_i$'s.
Thus, this assignment has size $\frac{n}{4} + \frac{n}{4} + (1-\frac{1}{e}) \frac{n}{2} =
n(1- \frac{1}{2e})$. This means that the TSM algorithm cannot achieve a
factor better than $({1-\tfrac{2}{e^2}}) / ({\tfrac{4}{3} - \tfrac{2}{3e}})$.

\section{Concluding Remarks}


\noindent
{\bf Applying the insights to the display ads application.}
The approach of using the offline solution to
allocate ads online may be quite useful in practice because while one can invest some time offline to find the guiding
solutions, the online allocation has to be done very quickly in this application.  One can use this approach to model
other objective functions such as fairness in quality of ad slots assigned to ads, which may be solvable
offline with some computational effort.
As an example, we elaborate on the extension of our algorithm to the following problem.
In the display ads business, advertisers have ``frequency caps;'' i.e., they do not want the same
user to see their ad more than some fixed (constant) number of times. We can extend our approach here to get
a $1-1/e$ approximation as shown in the appendix.

\medskip
{\bf \noindent Generalizing the algorithm.}
One can generalize the two-matching algorithm to a ``$k$-matching'' algorithm by computing $k$ matchings
instead of 2 matchings, and then using them online in a prescribed order. We can easily show that if the underlying expected graph $\G$
admits $k$ edge-disjoint perfect matchings, the approximation factor of such an algorithm is $1-{2\over e^2}\simeq 0.72$
and $1-{5\over e^3}\simeq 0.75$ for $k=2$ and $k=3$ respectively, however
for $k=3$, we do not know how to generalize our result for to graphs.
One natural question left open by this work is what constant $c(k)$
is achieved by extending to $k$ matchings, where $.67 \leq c(k) \leq .99$.

\medskip
{\bf \noindent Fractional version.}
A theoretical version of online stochastic matching problem that may be of interest is the case in which
$e_i$'s are not necessarily integers, but arbitrary rational numbers.
We observe the analysis of the  ``one suggested matching'' algorithm can be
generalized to this case, but do not know how to generalize the analysis of the ``two suggested matchings'' algorithm.
The details are in the appendix.

\ignore{
{\bf \noindent Advantages of using an off-line solution.}

where impressions come from
a set of users and we do not
want to assign more than $t$  ads of the same type to
impressions from the same users. Online algorithms for ad allocation problems with frequency
capping has been studied, and $1-{1\over e}$-competitive algorithms have been
developed for them~\cite{GM-wine}. In the i.i.d model, our algorithm is easily generalized
to the case with frequency capping. The offline problem with frequency capping correspond
to a maximum flow problem (in a three-layer flow graph), and one can use
this offline assignment as an online solution with the guarantee that the number of
ads assigned to a particular impression is at most $t(1+\epsilon)$ with
high probability.}

\ignore{instead of finding two edge-disjoint matchings and trying these two matchings in an order when an impression arrive,
we can compute $k$ edge-disjoint matchings and upon arriving an impression, we first try to match the impression
according to the first matching, and if it fails, we try the second matching, and if we fail, we try to assign
the impression according to the third matching, and so on.
For the sake of comparison, consider a special case of the online matching problem
$\instance(\G, \dist, n)$ with $n$ ads, a uniform distribution over impression types
and $e_i=1$ for each $i\in \req$ where the underlying graph $\G$ has $k$ edge-disjoint perfect matchings.
An example of instances in which this is true if $\G$ includes a $k$-regular subgraph. Consider
the case of $k=2$ and $k=3$. For $k=2$, each ad is assigned an impression with probability
$1-{2\over e^2}$ and thus, with high probability, the number of ads assigned in the
two-matching algorithm is at least $(1-{2\over e^2})n$ with high probability. Since the optimum is
less htan $n$, the approximation factor of two-matching algorithm is $0.72$.
Similarly, for $k=3$, the probability that each ad gets an assigned impression is $1-{5\over e^3}=0.75..$,
and thus, the approximation factor of the ``three-matching'' algorithm for these instances is $0.75$.
This analysis works for dense bipartite graphs in which many edge-disjoint perfect matchings exist.
Generalizing the analysis of the two-matching algorithm to the $k$-matching algorithm for general instances by
exploring edge-disjoint decomposition of bipartite graphs of degree at most $k$
in an interesting subject of future research.
}

\section*{Acknowledgements.}
We thank Ciamac Moallemi and Nicole Immorlica for pointing us to related work.


\appendix

\section{Balls in Bins}
In this part, we prove the concentration facts we used throughout the paper.

\bigskip
{\bf \noindent Fact \ref{thm:bibsubset}.}
{\em
Suppose $n$ balls are thrown into $n$ bins, i.i.d. with uniform probability over the bins.
Let $B$ be a particular subset of the bins, and $S$ be a random variable that equals the number of bins from $B$ with at least one ball.
With probability at least $1 - 2 e^{-\epsilon n / 2}$, for any $\epsilon > 0$, we have
$$
|B| (1 - {1 \over e}) - \epsilon n
\leq \: S \: \leq
|B| (1 - {1 \over e} + {1 \over e n}) + \epsilon n
$$
}
\bigskip

\begin{proofof}{Fact~\ref{thm:bibsubset}}
We have $E[S] = \sum_{a \in B} 1 - (1 - {1 \over n})^n$ and so using standard identities, we obtain
$$
\sum_{a \in B} 1 - e^{-1}
\leq \: E[S] \: \leq
\sum_{a \in B} 1 - e^{-1} (1 - {1 \over n}).
$$
Since $S$, as a function of the placements of the $n$ balls, satisfies the Lipschitz condition, we may apply
 Azuma's inequality to the Doob Martingale and obtain
$$
\mathrm{Pr}[|S - E[S]| \geq \epsilon n] \leq 2 e^{-\epsilon^2 n / 2}.
$$
\end{proofof}

\bigskip
\bigskip

{\bf \noindent Fact \ref{thm:bib}.}
{\em
Suppose $n$ balls are thrown into $n$ bins, i.i.d. with uniform probability over the bins.
Let $B_1, B_2, \dots, B_\ell$ be ordered sequences of bins, each of size
$c$, where no bin is in more than $d$ such sequences.
Fix some arbitrary subset $\bibR \subseteq \{1, \dots, c\}$.
We say that a bin sequence $B_a = (b_1, \dots, b_c)$ is ``satisfied'' if
\begin{itemize}
\item
at least one of its bins $b_i$ with $i \not \in \bibR$ has at least one ball in it; or,
\item
at least one of its bins $b_i$ with $i \in \bibR$ has at least two balls in it.
\end{itemize}
Let $S$ be a random variable that equals the number of satisfied bin sequences.
With probability at least $1 - 2 e^{-\epsilon^2 n / 2}$, we have
$$
S \geq \ell (1 - \frac{2^{|\bibR|}}{e^c}) - \epsilon d n - \frac{2^{|\bibR|} c^2}{e^c} {\ell \over n - c^2}
$$
}

\begin{proofof}{Fact~\ref{thm:bib}}
First, we claim
$$
E[S] \geq \ell \bigg ( 1 - \frac{2^{|\bibR|}}{e^c} \bigg ( 1 + {c^2 \over {n - c^2}} \bigg ) \bigg )
$$
To see this, fix some bin sequence $B_a$.  The probability that $B_a$ is not satisfied is
$$
\sum_{\bibR' \subseteq \bibR} \binom{n}{|\bibR'|} |\bibR'|! \; n^{-|\bibR'|} \bigg ( 1-\frac{c}{n} \bigg )^{n - |\bibR'|}
\leq \sum_{\bibR' \subseteq \bibR} \bigg ( 1-\frac{c}{n} \bigg )^{n - c}
\leq \frac{2^{|\bibR|}}{e^c} \bigg ( 1 + {c^2 \over n - c^2} \bigg ).
$$
The bound on $E[S]$ follows by linearity of expectation.
Now, consider $S$ as a function of the placements of the $n$ balls.
Moving one ball can affect $S$ by at most $d$, since each bin is in at
most $d$ sequences.  Thus we may apply Azuma's inequality and obtain,
for all $\epsilon > 0$,
$$
\mathrm{Pr}[|S - E[S]| \geq \epsilon d n] \leq 2 e^{-\epsilon^2 n / 2}.
$$
\end{proofof}

\section{Details of the proof for Hardness Result}
Consider the instance which consists
of a large number $k$ copies of  6-cycles, the uniform distribution
on the union of the impressions, and $n = 3k$.  Let $\gamma_1, \gamma_2,
\gamma_3$ and $\gamma_+$ be the fraction of the cycles that receive $1, 2,
3$ and more than $3$ impressions, respectively.  We have (using a simple application of Azuma's inequality) that with high probability,
\begin{eqnarray*}
\gamma_1 & = & 3k            {1 \over k}   (1 - {3 \over 3k})^{3k - 1} \simeq {3 \over e^3},\\
\gamma_2 & = & \binom{3k}{2} {1 \over k^2} (1 - {3 \over 3k})^{3k - 2} \simeq {9 \over 2 e^3},\\
\gamma_3 & = & \binom{3k}{3} {1 \over k^3} (1 - {3 \over 3k})^{3k - 3} \simeq {27 \over 6e ^3},\\
\gamma_+ & = & 1 - \gamma_1 - \gamma_2 - \gamma_3
\end{eqnarray*}
For cycles that receive 1 or 2 impressions, we can assume that both
$\alg$ and $\opt$ match 1 or 2 ads, respectively.  As we are
upper-bounding $E[\alg/\opt]$, we may assume that on cycles that
receive more than 3 impressions, both $\alg$ and $\opt$ achieve 3
matches, which maximizes the contribution of these cycles to the ratio $\alg/\opt$.

For cycles that receive exactly 3 impressions, we have the same situation
as in the single cycle above.  We assume wlog that $x$ arrives first
and is matched to ad $a$.  If the other two impressions are also both
$x$, then both $\alg$ and $\opt$ match two ads ($a$ and $c$) for this
cycle.  If the other two impressions are both $y$, we have that $\alg$
matches at most two ads but $\opt$ matches three.  In all other
scenarios, we assume that both $\alg$ and $\opt$ match three ads.  By
a Chernoff bound, with high probability the scenarios $(x, x, x)$ and
$(x, y, y)$ each happen $\simeq \gamma_3 k / 9$ times.

Summarizing, we have argued that with high probability,
\begin{eqnarray*}
{\alg \over \opt} \; \simeq \;
\frac{\gamma_1 + 2 \gamma_2 + 3 \gamma_+ + (2 \cdot {2 \over 9} + 3 \cdot {7 \over 9}) \gamma_3}
{\gamma_1 + 2 \gamma_2 + 3 \gamma_+ + (2 \cdot {1 \over 9} + 3 \cdot {8 \over 9}) \gamma_3}
\; \simeq \;
\frac{6 e^3 - 23}
{6 e^3 - 22}
\; \approx \;
.9898.
\end{eqnarray*}
This establishes Theorem~\ref{thm:hardness}.

\section{Non-integral Impression Arrival Rates}\label{fractional}
One natural extension of the online stochastic matching problem is the case in which
$e_i$'s are not necessarily integers, but arbitrary rational numbers.
We observe that the ``Suggested Matching'' algorithm, with $1-{1\over e}$-approximation factor, easily generalizes
to this case, as follows: instead of computing a maximum matching, we can
compute a maximum flow, $f$, on the corresponding flow graph,  and upon the arrival
of an impression $i$, assign the impression $i$ to an ad $a$ with probability $f_{ia}$, i.e.,
proportional to the fractional edge from $i$ to $a$.  Given
the total fraction $F_a$ on each ad $a$, we can argue that this
algorithm achieves value $\sum_{a\in A} (1-e^{-F_a})$ with high probability.
Moreover, one can show that optimum is at most $\sum_{a\in A} F_a$ with high
probability. As a result, the approximation factor of the algorithm can
be captured by the ratio $\sum_{a\in A} (1-e^{-F_a})\over \sum_{a\in A} F_a$
where $0\le F_a\le 1$ for all $a\in A$.
Since $0 \leq F_a \leq 1$, we can characterize this bound as
the solution to the following mathematical program:

\begin{eqnarray*}
\min & \sum_{a\in A} (1-e^{-F_a}) &  \\
s.t.& \sum_{a\in A} F_a = 1 & \\
& 0\le F_a\le 1 & \forall a\in A
\end{eqnarray*}

This mathematical program can be solved analytically. Consider the
vector $\Phi$ of values $F_1,\ldots,F_{|A|}$ in nonincreasing order of
$F$'s, and let $f(\Phi)=\sum_{a\in A} (1-e^{-F_a})$. For any vectors
$\Phi_1$ and $\Phi_2$ subject to $||\Phi||_1 = 1$, if $\Phi_1$
majorizes $\Phi_2$, then clearly $f(\Phi_1)\geq f(\Phi_2)$. Since the
uniform vector $\bar{\Phi}=[1/|A|,\ldots,1/|A|]$ is majorized by all
the vectors, $f(\bar{\Phi})=|A|(1-e^{-1/|A|})$ is the minimum value
attainable. When $|A|=1$, $f(\bar{\Phi})=1-1/e$. We derive that
${df(\bar{\Phi})\over d|A|}= 1-e^{-1/|A|}+|A|(-e^{-1/|A|} \times
1/|A|^2) = 1-e^{-1/|A|}-e^{-1/|A|}/|A|$.  Now,
$1-e^{-1/|A|}-e^{-1/|A|}/|A| \geq 0$ because multiplying by
$e^{1/|A|}$, we get $e^{1/|A|} \geq 1+1/|A|$ which follows from the
Maclaurin series expansion of $e^x$. Thus, $df(\bar{\Phi})/d|A| \geq
0$ and this implies that the solution to the mathematical program is
attained at $1-{1\over e}$.  Therefore, the approximation factor of
the algorithm is equal to $1-{1\over e}$ with high probability.

Generalizing the TSM algorithm to non-integer $e_i$s needs a proper decomposition of
the flow on the corresponding flow graph to two edge-disjoint flows each with large
values. Unlike the integral case, such edge-disjoint decomposition is not possible
for the non-integer $e_i$'s and one need to exploit other ideas to
analyze the algorithm. We leave this as an open question.

\section{Frequency Capping}\label{freqcapping}
A useful generalization of the online matching problem that is
well-motivated by the ad allocation application is when the
advertisers have ``frequency caps;'' i.e., they do not want the same
user to see their ad more than some fixed (constant) number of times.
We can regard the user as a ``feature'' of the
impression; i.e., that an ``impression'' $i$ as we've used it in this
paper is in fact a pair $\langle i, u \rangle$, where $u$ is a
particular user, and we have a distribution that gives us $e_{\langle
i, u \rangle}$, the expected number of impressions of each type from
each user.
Also as part of the input, we are given, for each advertiser $a$, a
total number of impressions $d_a$ and a cap $c$ per user.
We could also regard these caps as operating as impression limits on
other features, e.g., demographic or geographic.

Our $1 - {1 \over e}$-approximation algorithm (the {\em suggested
matching} algorithm) from Section~\ref{sec:onematchingalg} is easily
extended to this generalization of the problem.  Here we give a sketch
of this extension.  For the algorithm, we simply make another layer
$U$ of nodes in our max-flow computation, with one node $\langle a, u
\rangle$ for each (advertiser, user) pair.  We make edges from each $a
\in \ads$ to this layer with capacity $c$, and set the capacity of the
edge edge $(s, a)$ to $d_a$.  The algorithm proceeds as before, and
one can easily show with the same argument that the number of
impressions matched is $\simeq F (1 - 1/e)$, where $F$ is the value of
the flow.  Then, by reasoning about the min-cut in this graph, with
some simple reasoning about where this new layer sits in the min-cut,
one can still show that $\OPT$ is bounded by $F$ with high
probability, giving the desired approximation ratio.

Interestingly, it is more challenging to generalize the TSM
algorithm. Setting the capacities to $2 d_a$ and $2$, respectively, of
the top and mid-layer edges does not work as desired, since then the
flow could be spread among more than $d_a$ nodes in the middle layer.

\end{document}